\newcommand{\coloneqq}{\mathrel{\mathop:}\mathrel{\mkern-1.2mu}=}
\Crefname{lemma}{Lemma}{Lemmata}
\Crefname{theorem}{Theorem}{Theorems}
\Crefname{prop}{Proposition}{Propositions}
\Crefname{claim}{Claim}{Claims}
\newtheorem{theorem}{Theorem}[section]
\newtheorem{lemma}[theorem]{Lemma}
\newtheorem{proposition}[theorem]{Proposition}
\newtheorem{corollary}[theorem]{Corollary}
\newtheorem{definition}[theorem]{Definition}
\newcommand{\N}{\mathbb{N}}
\newcommand{\Q}{\mathbb{Q}}
\newcommand{\sTree}{\mathbb{B}}
\newcommand{\br}[1]{[#1]}
\newcommand{\acts}{\curvearrowright}
\newcommand{\ignore}[1]{}
\newcommand{\tup}{t}
\newcommand{\struct}[1]{\mathbb{#1}}  
\newcommand{\s}{\ensuremath{\mathrm{s}}\xspace}
\DeclareMathOperator{\CSP}{CSP}
\DeclareMathOperator{\Pol}{Pol}
\DeclareMathOperator{\pr}{pr}
\DeclareMathOperator{\Aut}{Aut}
\DeclareMathOperator{\End}{End}
\DeclareMathOperator{\supp}{supp}
\DeclareMathOperator{\mx}{mx}
\DeclareMathOperator{\mi}{mi}
\DeclareMathOperator{\dual}{dual}
\DeclareMathOperator{\lex}{lex}
\DeclareMathOperator{\minx}{minx}
\newcommand{\sA}{\mathbb A}
\newcommand{\sB}{\mathbb B}
\newcommand{\sG}{\mathbb G}
\newcommand{\sH}{\mathbb H}
\newcommand{\sL}{\mathbb L}
\newcommand{\sT}{\mathbb T}
\newcommand{\bin}{E}
\newcommand{\clone}[1]{\mathcal{#1}}
\newcommand{\pp}[1]{pp_{#1}}
\newcommand{\llp}[1]{{\ell \ell}_{#1}}
\newcommand{\nested}[2]{{#1^{[#2]}}}
\newcommand{\gen}[1]{\langle#1\rangle}
\newcommand{\genrel}[2]{\langle{#1}\rangle_{#2}}
\newcommand{\Phylo}[1]{\ensuremath{(\struct{L};|)}#1}
\newcommand{\Phyloconvex}[1]{\ensuremath{(\struct{L};|,\prec)}#1}
\newcommand{\tx}{\ensuremath{\mathrm{tx}}\xspace}
\newcommand{\hyp}{\ensuremath{R}\xspace}
\newcommand{\pal}{pseudo-algebraic length~$1$}
\definecolor{darkolivegreen}{rgb}{0.33, 0.42, 0.18}
\definecolor{darkpastelblue}{rgb}{0.47, 0.62, 0.8}
\definecolor{mediumcarmine}{rgb}{0.69, 0.25, 0.21}
\definecolor{auburn}{rgb}{0.43, 0.21, 0.1}
\definecolor{cadmiumgreen}{rgb}{0.0, 0.42, 0.24}
\newcommand{\johanna}[1]{\todo{\textbf{Johanna:} #1}}
\newcommand\jo[1]{\textcolor{black}{#1}} 
\newcommand\mo[1]{\textcolor{black}{#1}} 
\newcommand{\Addresses}{
  \bigskip

  \noindent{\textsc{Institut f\"{u}r Diskrete Mathematik und Geometrie, FG~Algebra, TU~Wien, Austria}}\par\nopagebreak
  \noindent{\textit{Email address}: \texttt{firstname.lastname@tuwien.ac.at}}
}
\title[When Darwin met Ianus: dichotomies of expressivity]{When Darwin met Ianus: dichotomies of expressivity}
\date{}
\begin{document}

\maketitle
\begingroup
\renewcommand\thefootnote{}\footnotetext{The Roman god Janus (Latin: \emph{Ianus}), traditionally depicted with two faces looking in opposite directions, embodies the duality of temporal perspective.
With his theory of evolution, Charles Darwin (1809–1882) laid the foundation for modern phylogenetics.

Funded by the European Union (ERC, POCOCOP, 101071674). Views and opinions expressed are however those of the author(s) only and do not necessarily reflect those of the European Union or the European Research Council Executive Agency. Neither the European Union nor the granting authority can be held responsible for them.
This research was funded in whole or in part by the Austrian Science Fund (FWF) [I5948]. For the purpose of Open Access, the authors have applied a CC BY public copyright licence to any Author Accepted Manuscript (AAM) version arising from this submission.
The first author is a recipient of a DOC Fellowship of the Austrian Academy of Sciences at the Technische Universität Wien.
}
\addtocounter{footnote}{-1}
\endgroup

\begin{center}
\textsc{Johanna Brunar\orcidlink{0009-0000-7229-0172}, Michael Pinsker\orcidlink{0000-0002-4727-918X}, and Moritz Sch\"obi\orcidlink{0009-0004-6745-2658}} \\[0.5em]
\end{center}

\vspace{1em}

\begin{abstract}
The  classifications of temporal and phylogeny constraint languages stand among the most seminal complexity classifications within infinite-domain Constraint Satisfaction Problems (CSPs), yet  remain the most mysterious in terms of algorithms and algebraic invariants for the tractable cases. 
 We show that those    languages which do not pp-construct  EVERYTHING (and thus by the classifications are solvable in polynomial time) have, in fact, very  limited expressive power  as measured by the graphs and hypergraphs they can pp-interpret. 
This limitation yields many previously unknown algebraic consequences, while also providing new, uniform proofs for known invariance properties.  In particular, we show that such temporal and phylogeny constraint languages  
admit $4$-ary pseudo-Siggers polymorphisms -- a result that sustains the possibility that the existence of such polymorphisms extends to the much broader context of the Bodirsky-Pinsker conjecture. Although temporal and phylogeny constraint languages appear to follow fundamentally different algorithmic principles, our proofs reveal a common core and proceed along strikingly similar lines.

\vspace{0.1em} 
\begin{flushright}
\begin{minipage}{0.6\textwidth}
\raggedleft 
\textit{When Ianus can't express it all\\ 
He tries in vain, he hits a wall\\ When for $\mathbb{K}_3$ no way he knows\\ His face of pseudo-loops he shows}
\end{minipage}
\end{flushright}

\end{abstract}
\section{Introduction}  
The \emph{Constraint Satisfaction Problem} induced by a relational
structure~$\sA$, denoted by $\CSP(\sA)$, is the computational problem of 
deciding, given a finite input structure~$\sB$ of the same signature as~$\sA$,
whether there exists a homomorphism from~$\sB$ to~$\sA$,~i.e.~a map preserving 
all relations. The underlying structure $\sA$ will often be referred to as \emph{template} structure. 
This notion of fixed-template CSPs provides a uniform framework for modelling
many classical computational problems such as graph-colouring problems, $3$-SAT, or solving 
equations. In 2017, two independent confirmations of a
complexity dichotomy for CSPs induced by arbitrary finite structures -- conjectured already in~\cite{FederVardiSTOC,FederVardi} -- marked a
breakthrough in the research programme on the complexity of finite-domain CSPs: for every finite structure $\sA$, either $\CSP(\sA)$ is solvable in polynomial time, or it is NP-complete~\cite{BulatovFVConjecture,ZhukFVConjecture,Zhuk20}, contrasting Ladner's theorem~\cite{Ladner}. 
Yet, prominent computational problems that can be phrased as the CSPs of a fixed template~$\sA$ require the domain of~$\sA$ to be of countably infinite size -- for example, the problem of deciding whether a given finite digraph contains a cycle can be phrased as CSP over $(\Q; <)$, but cannot be modelled in this way by any finite template. 
There provably does not exist a complexity dichotomy for CSPs with infinite templates, not even if the template is `close to finite' in the sense of \emph{$\omega$-categoricity}~\cite{BodirskyGrohe,GJKMP-conf}. For certain countably infinite $\omega$-categorical structures, namely \emph{first-order reducts of finitely bounded homogeneous} structures, though, a P/NP-complete complexity dichotomy has been conjectured by Bodirsky and Pinsker more than a decade ago (see~\cite{BPP-projective-homomorphisms,BartoPinskerDichotomy, BKOPP,BKOPP-equations} for various formulations of the conjecture). The conjecture remains wide open in its generality, but has been verified for several classes of structures fitting into the conjectured framework, including temporal constraint languages~\cite{tcsps_conf,bodirsky_complexity_2010}, phylogeny CSPs~\cite{Phylo-Complexity-Conference,Phylo-Complexity}, equality constraint languages~\cite{ecsps},  the universal homogeneous poset~\cite{posetCSP16,posetCSP18}, MMSNP~\cite{MMSNP_conf,MMSNP-Journal}, first-order reducts of any homogeneous undirected graph~\cite{BMPP16} including the random graph~\cite{BodPin-Schaefer-both}, the universal homogeneous tournament~\cite{MottetPinskerSmooth,SmoothapproxJACM}, and graph orientation problems with forbidden tournaments~\cite{BodirskyGuzman,bittermottet_graphorientation,fellerpinsker_graphorientation}. \jo{It is worth noting that with the exception of temporal and phylogeny constraint languages, all known complexity classifications make use of a general procedure from~\cite{Bodirsky-Mottet}, which relies on \emph{canonical polymorphisms}~\cite{BPT-decidability-of-definability} to show polynomial-time solvability of an infinite-domain CSP by reducing the problem to the CSP of a finite template. From an algorithmic perspective, temporal and phylogeny constraint languages therefore appear fundamentally different from all other classes for which the Bodirsky-Pinsker conjecture has been verified. }

\jo{\emph{Temporal constraint languages} are structures of the form $(\Q; R_1, R_2, \dots)$ where each $R_i$ has a first-order definition in $(\Q; <)$ -- the rational numbers with the dense linear order. Their  complexity classification~\cite{tcsps_conf,bodirsky_complexity_2010}, which predates the Bodirsky-Pinsker conjecture and most likely inspired it,  is among the earliest within its scope.   } 
 To this day, temporal constraint languages remain an active subject of study.
 They give rise to 
 a natural and important class of CSPs, appearing in prominent problems from artificial intelligence such as temporal and spatial reasoning, see~e.g.~\cite{RCC5JD,BroxvallJonsson,KrokhinJeavonsJonsson,DrakengrenJonssonMetric,BJMMS}. 
 Yet despite their practical importance, temporal CSPs remain among the most challenging and least understood templates within the scope of the Bodirsky–Pinsker conjecture. \jo{This is not only due to the algorithmic challenges outlined above, but also because their} 
 descriptive complexity 
 differs vastly from the finite~\cite{RydvalDescr}. Moreover,  the algebraic properties of temporal constraint languages -- crucial to the success of the so-called \emph{algebraic approach} to CSPs -- are still largely unknown. 
 While ongoing research aims to gain deeper insight into the cases that are solvable in polynomial time through new algorithmic approaches~\cite{synergies}, our contribution is to shed light on the algebraic aspects. 

 Phylogeny problems arise from evolutionary biology. 
In a (rooted)
\emph{phylogenetic tree}, every species has exactly one direct ancestor, and any two share a youngest common ancestor. 
%
%
Given three leaves $a,b$ and $c$ of such a tree, i.e., three species without descendants, we write $a|bc$ to denote that the youngest common ancestor of $b$ and $c$ lies below the youngest common ancestor of $a,b$ and $c$ in the tree.
The set of leaves together with the ternary relation $|$ is called the \emph{leaf structure} of the tree. 
%
%
\jo{The \emph{basic phylogeny decision problem} asks whether, given a finite relational structure with one ternary relation, there exists a
tree to whose leaf structure it  maps to homomorphically.}
This problem can be modelled as the $\CSP$ over the infinite-domain \emph{generic leaf structure} $\Phylo$ \cite{BodJonsPham}, and is solvable in quadratic time \cite{AhoPhylo81}.
Moreover, all CSPs induced by \emph{phylogeny constraint languages}, i.e.\;structures of the form $(L; R_1, R_2, \dots)$ where each $R_i$ has a first-order definition in $\Phylo$, are known to be either solvable in polynomial time or NP-complete~\cite{BodirskyRootedPhylo}. However, as the standard reduction to the finite via canonical polymorphisms is again inapplicable, phylogeny together with temporal constraint languages remain, in a sense, the main troublemakers in the infinite.  

 \subsection{The algebraic approach} 
The complexity of finite-domain CSPs is determined by the `expressive power'
of the underlying structure~$\sA$, which is encoded in the set $\Pol(\sA)$ of all compatible finitary operations -- the \emph{polymorphism clone} of~$\sA$. The study of the computational complexity of CSPs via the polymorphisms they have 
is what is now understood as the algebraic approach to CSPs, first developed in~\cite{Jeavons,JBK} and unified by~\cite{wonderland}. The finite-domain CSP dichotomy now takes the following form: as the only source of NP-completeness for $\CSP(\sA)$ stands the ability of~$\sA$ to \emph{pp-construct} EVERYTHING,~i.e.~every finite structure (see~\Cref{sect:modeltheor} for the definition of a pp-construction). For the purpose of this paper, we call a structure that pp-constructs EVERYTHING \emph{omni-expressive}. 
Every finite structure that lacks this property gives rise to a CSP solvable in polynomial time, and its polymorphism clone satisfies certain symmetries.  Here, we say that the polymorphism clone $\Pol(\sA)$ of a structure $\sA$ \emph{satisfies} an identity, if the identity is witnessed by operations contained in $\Pol(\sA)$ for all evaluations of their arguments.
For example,  the polymorphism clone of a finite structure that is not omni-expressive always contains
\begin{itemize} 
    \item a $6$-ary  polymorphism $s$ witnessing the \emph{$6$-ary Siggers identity}~\cite{Siggers} 
        \begin{equation}\label{eq:identity1}
            s(x, y, x, z, y, z) \approx s(y, x, z, x, z, y)  ,
        \end{equation}
    \item a $4$-ary  polymorphism $s$ witnessing the \emph{$4$-ary Siggers identity}~\cite{KearnesMarkovicMcKenzie} 
        \begin{equation}\label{eq:identity2}
            s(a, r, e, a) \approx s(r, a, r, e),
        \end{equation}
    \item a $6$-ary polymorphism $o$ witnessing the \emph{Ol\v{s}\'ak identities}~\cite{olsak-idempotent}
        \begin{equation}\label{eq:identity5}
            o (x, x, y, y, y, x) \approx o(x,y,x,y,x,y)\approx o(y,x,x,x,y,y) , 
        \end{equation}
        \item for some $k\geq 3$, a $k$-ary polymorphism $w$ witnessing the \emph{weak near unanimity (WNU) identities}~\cite{MarotiMcKenzie} \begin{equation}\label{eq:identity3}
            w(y, x, x, \dots, x) \approx w(x, y, x, \dots, x) \approx \dots \approx w(x, x, \dots, x, y),
        \end{equation}
    \item for some $k\geq 3$, a $k$-ary polymorphism $c$ witnessing the \emph{cyclic identity}~\cite{Cyclic}
        \begin{equation}\label{eq:identity4}
            c(x_1, x_2, \dots, x_k) \approx c(x_2, \dots, x_k, x_1).
        \end{equation}
\end{itemize}

An algebraic approach to CSPs via polymorphisms is for countably infinite structures only possible in the setting of $\omega$-categoricity. Indeed, 
if $\sA$ is $\omega$-categorical, 
then the complexity of $\CSP(\sA)$ is again captured within the `local' algebraic 
structure of $\Pol(\sA)$~\cite{Topo-Birk}. 
Moreover, for first-order reducts of finitely bounded homogeneous structures $\sA$, it is conjectured that omni-expressivity remains the only source of NP-completeness of $\CSP(\sA)$~\cite{BPP-projective-homomorphisms,wonderland,BKOPP,BKOPP-equations}. 
Naturally, the question arises whether the identities (\ref{eq:identity1})-(\ref{eq:identity4}) have counterparts for $\omega$-categorical structures that are not omni-expressive. 
This was answered affirmatively in~\cite{Topo} for a \emph{pseudo-version} of identity (\ref{eq:identity1}): 
every $\omega$-categorical structure that does not pp-construct EVERYTHING has polymorphisms $s,u,$ and $v$ witnessing the 
identity 
\begin{equation}\label{eq:pseudosiggers}
    u\circ s(x, y, x, z, y, z) \approx v \circ s(y, x, z, x, z, y).
\end{equation} 
The pseudo-versions of identities (\ref{eq:identity2})–(\ref{eq:identity4}) are defined analogously, by composing each operation symbol in the original identities with a new unary operation symbol. 
The corresponding statements for the pseudo-versions of both the cyclic identities (\ref{eq:identity4}) and the weaker  WNU identities (\ref{eq:identity3}) are known to be false in general (for  temporal structures that are not omni-expressive and fail to admit pseudo-cyclic polymorphisms,  see~e.g.~\cite[Proposition 12.9.1]{Book}; an example of an $\omega$-categorical not omni-expressive structure with no pseudo-WNU polymorphisms is given in~\cite[Theorem 4]{symmetries}). 
It is, however, worth noting that the condition of satisfying a WNU or a cyclic identity  differs fundamentally from the condition of satisfying the identities (\ref{eq:identity1}), (\ref{eq:identity2}), or  (\ref{eq:identity5}) in that the former is  in fact  defined by an infinite disjunction of $k$-ary formulae for every $k \geq 3$,  while the latter is a single formula. 
To the best of the authors' knowledge, no $\omega$-categorical structures are known that are not omni-expressive and do not admit $4$-ary pseudo-Siggers or  pseudo-Ol\v{s}\'ak polymorphisms, leaving open the possibility that the existence of these polymorphisms may in fact characterise non-omni-expressivity for $\omega$-categorical structures --  a question highlighted in~\cite[Question $22$]{Book}. Several indications point towards a positive answer: 
it is known~\cite[Proposition 6.6]{BPP-projective-homomorphisms} that the pseudo-versions of all  sets of identities that characterise non-omni-expressivity in the finite carry over in the case of $\omega$-categorical not omni-expressive structures that adhere  to the standard-reduction from~\cite{Bodirsky-Mottet}. 
Temporal constraint languages and phylogeny CSPs are the only completely classified classes within the scope of the Bodirsky-Pinsker conjecture that do not conform to this reduction. As such, they currently represent the only known candidates for counterexamples within this range.  
Moreover, every \emph{conservative} $\omega$-categorical structure that is not omni-expressive  admits $4$-ary pseudo-Siggers polymorphisms~\cite{sorrows}.  
Finally, in a somewhat different direction, the absence of pseudo-Ol\v{s}\'ak polymorphisms is known to imply NP-completeness~\cite{synergies} for $\omega$-categorical structures (though here the NP-hardness is not known to stem from omni-expressivity).

  \jo{Despite substantial progress in the general theory of CSPs since their classifications, the algebraic invariants underlying temporal and phylogeny templates} that are not omni-expressive  have remained poorly understood. Prior to this work,  the identities known to be satisfied in these settings were  those inherited from general results -- the existence of $6$-ary pseudo-Siggers polymorphisms~\cite{Topo} and, assuming $P\neq NP$, pseudo-Ol\v{s}\'ak polymorphisms~\cite{synergies}. In addition, for temporal constraint languages, the existence of pseudo-WNU polymorphisms of all arities  $k\geq 3$~\cite[Proposition 7.27]{RydvalDescr}, \jo{and for phylogeny constraint languages,  a binary pseudo-commutative polymorphism had been established~\cite{Phylo-Complexity}.} 
  In this paper, we provide a uniform framework that not only captures the existence of the aforementioned 
  polymorphisms, but also establish a whole new family of identities satisfied by all  temporal \jo{and all phylogeny} constraint languages that are not omni-expressive. Among these, the existence of 4-ary pseudo-Siggers polymorphisms stands out as a previously unknown representative.

\subsection{Loop lemmata}

In the early stages of the 
systematic research programme on CSPs, those induced by finite graphs were among the first to be studied. Observe that the classical $k$-colouring problem coincides with $\CSP(\mathbb K_k)$, where $\mathbb K_k$ denotes the clique on $k$ vertices. Vastly generalising the NP-completeness of the $k$-colouring problem for all $k \geq 3$, Hell and Ne\v{s}et\v{r}il showed that every undirected non-bipartite graph is either omni-expressive (and hence its CSP is NP-complete), or has a loop (in which case it is entirely inexpressive, and in particular, its CSP  trivial)~\cite{HellNesetril}.
This result was later extended to certain digraphs:

\begin{theorem}[\cite{BartoKozikNiven}]\label{bartokozikniven}
  Let $\sG$ be a finite smooth digraph of algebraic length~$1$. Either $\sG$ is omni-expressive or  $\sG$ contains a loop.
\end{theorem}

  As first observed in~\cite{Siggers}, loops in digraphs correspond to  algebraic invariants. 
Given a finite digraph $\sG$ and an enumeration $(i_1,j_1),(i_2,j_2),\dots,(i_m,j_m)$ of its edges,   the identity 
\[s(x_{i_1} , \dots , x_{i_m} ) \approx s(x_{j_1} , \dots , x_{j_m})\]
is called the \emph{$\sG$-loop condition};  in particular, the $6$-ary Siggers identity (\ref{eq:identity1}) is the $\mathbb K_3$-loop condition, and the $4$-ary Siggers identity (\ref{eq:identity2}) is the the loop condition induced by the digraph in \Cref{fig:siggers}.
\ignore{\begin{center}
  \begin{tikzcd}
  & e \arrow[rd] & \\
a \arrow[ru] \arrow[rr, no head] & & r
\end{tikzcd}
\end{center} }
\begin{figure}[t]
    \centering
    \caption{The $4$-ary Siggers digraph}
    \label{fig:siggers}
    \vspace{\baselineskip}

\begin{tikzpicture}
    \node (e) at (90:1cm) {$e$};
    \node (a) at (210:1cm) {$a$};
    \node(r) at (330:1cm) {$r$};
    \draw[->]
        (a) edge (e)
        (e) edge (r)
        (r) edge[bend right] (a)
        (a) edge[bend right] (r);
\end{tikzpicture}
\end{figure}

By standard techniques, \Cref{bartokozikniven} implies that the polymorphism clone of every not omni-expressive finite structure  satisfies, in particular, both  Siggers loop conditions.

\begin{corollary}[\cite{KearnesMarkovicMcKenzie}]\label{corollary:KMM15}
    Let $\sA$ be a  finite  relational structure that is not omni-expressive. If $\sG$ is any finite smooth digraph of algebraic length $1$, then $\Pol(\sA)$ satisfies the $\sG$-loop condition. 
\end{corollary}

\section{Our contribution}
An \emph{oligomorphic}  subgroup $\Omega$ of the automorphism group of a countable digraph~$\sG$ has, in particular, finitely many orbits in its action
 on the domain of~$\sG$. One is thus led to consider the finite quotient of $\sG$ modulo $\Omega$, whose vertices are the $\Omega$-orbits, and whose edges are induced from $\sG$. Observe that a loop in this finite digraph comes from an edge in $\sG$ between two vertices belonging to the same $\Omega$-orbit -- a so-called \emph{pseudo-loop modulo $\Omega$}. 
 
 The non-existence of pseudo-loops in $\sG$ has been identified as a source of computational hardness for $\CSP(\sG)$ in several cases~\cite{Topo,symmetries,sorrows}. %
 Following this direction, we prove a variant of \Cref{bartokozikniven} for the case of digraphs $\sG$ that are \emph{pp-interpretable} in a non-omni-expressive constraint language $\sA$ that is temporal \jo{or phylogenetic}.  Roughly, this means that $\sG$ is expressible in primitive positive logic over $\sA$. 
 Since the structures $(\Q; <)$ and $\Phylo$ are $\omega$-categorical, their automorphism groups are is oligomorphic. The latter naturally act on $\sA$ by automorphisms, and in fact also on $\sG$ by the definition of a (pp-)interpretation.  
 Hence, we may regard $\Aut((\Q;<))$ or $\Aut(\Phylo)$, respectively, also as an oligomorphic subgroup of $\Aut(\sG)$.  

\begin{restatable}{theorem} {mainthm}\label{theorem:pseudoloop} Let $\sA$ be a  temporal \jo{(phylogeny)} constraint language that is not omni-expressive. If $\sG$ is any smooth digraph that is pp-interpretable in $\sA$ and has pseudo-algebraic length~$1$ modulo $\Aut((\Q;<))$ (modulo $\Aut(\Phylo)$), then $\sG$ contains a pseudo-loop modulo $\Aut((\Q;<))$ (modulo $\Aut(\Phylo)$).
\end{restatable}

\Cref{theorem:pseudoloop} reveals the dichotomous nature of temporal \jo{and phylogeny} constraint languages in terms of their expressivity. Namely, any such language 
 is either omni-expressive or inexpressive based on the digraphs it can pp-interpret; in the latter
case, the only pp-interpretable digraphs are those containing pseudo-loops. This loss of expressivity comes with a gain of algebraic invariants.  
The \emph{$\sG$-pseudo-loop condition} induced by a finite digraph $\sG$ arises from the $\sG$-loop condition by composing either side of the identity with unary function symbols $u,v$ as \jo{before}. 
From \Cref{theorem:pseudoloop}, we derive an infinite family of identities that is satisfied in every  temporal \jo{or phylogeny} constraint language that is not omni-expressive. Among these, we conclude the previously unknown existence of $4$-ary 
pseudo-Siggers polymorphisms.

\begin{restatable}{corollary}{fourarysiggers}\label{corollary:siggers}
    Let $\sA$ be a temporal (phylogeny) constraint language that is not omni-expressive. If $\sG$ is any finite smooth digraph of algebraic length~$1$, then  $\Pol(\sA)$ satisfies the $\sG$-pseudo-loop condition.
\end{restatable}

\Cref{corollary:siggers} confirms that the full range of loop conditions whose satisfaction is known to characterise non-omni-expressivity in the finite setting lifts, in their pseudo-versions, to all not omni-expressive temporal \jo{and to all not omni-expressive phylogeny} structures. This crucially rules out \jo{these structures} 
as counterexamples, reinforcing the broader validity of these characterisations. Moreover, \Cref{theorem:pseudoloop} yields a uniform proof for the validity of all such loop conditions. We remark that the use of \Cref{theorem:pseudoloop} together with a standard compactness argument is necessary in the sense that provably no operation built out of terms over $\ell \ell$ and automorphisms of $(\Q; <)$ can satisfy any non-trivial pseudo-identity.

As it turns out, sets of identities of the form as (\ref{eq:identity5}) and (\ref{eq:identity3}) are better suited for deriving further structural properties  that may,  in particular, be used in algorithms~\cite{BoundedWidth,BoundedWidthJournal,Bulatov,ZhukFVConjecture,Zhuk20,MottetPinskerSmooth,SmoothapproxJACM,CSPandchoice,algebraicity}. They are formalised through the notion of a \emph{$\sT$-loop condition} for hypergraphs $\sT$ of arity $n\geq 3$,  defined as a set of $n-1$ identities as in Section~\ref{sect:algebraic}; for further details, we refer the interested reader to~\cite{Pseudo-loop}.
  Extending our approach  to higher arities, we prove
 a version of~\Cref{theorem:pseudoloop} for the case of 
hypergraphs pp-interpretable in a not omni-expressive temporal \jo{or phylogeny} template. A \emph{pseudo-loop} in an $n$-ary hypergraph is a hyperedge all of whose coordinates belong to the same orbit; a hypergraph is \emph{cyclic} if its hyperedges are invariant under cyclic permutations, it is \emph{$2$-transitive} if it is invariant under the action of some $2$-transitive subgroup of the symmetric group on $n$ elements, \jo{and it is \emph{symmetric} if it is invariant under all permutations.} 

\begin{restatable}{theorem}{mainthmhyp}\label{theorem:mainthmhyp} Let $\sT$ be a hypergraph of arity $n\geq 2$ that is pp-interpretable in a non-omni-expressive constraint language $\sA$. \begin{enumerate}
    \item\label{item:thmhyptemp} If $\sA$ is a temporal constraint language, $n \geq 3$, and $\sT$ is cyclic and $2$-transitive, then $\sT$ contains a pseudo-loop modulo $\Aut((\Q;<))$.
    \item\label{item:thmhypphylo} If $\sA$ is a phylogeny constraint language and $\sT$ is symmetric, then $\sT$ contains a pseudo-loop modulo $\Aut(\Phylo)$.
    \ignore{\item If $\sA$ is a phylogeny constraint language, $n=2$, and $\sT$ contains a symmetric edge, then $\sT$ contains a pseudo-loop modulo $\Aut(\Phylo)$.}
\end{enumerate} 
\end{restatable}

The \emph{pseudo-loop condition} induced by a hypergraph is again obtained by composing all identities of the corresponding loop-condition with unary function symbols.
Since the hyperedges of the hypergraphs inducing the Ol\v{s}\'ak- and the WNU-identities are invariant under all permutations, they satisfy in particular the imposed symmetry conditions; hence, 
 the existence of pseudo-Ol\v{s}\'ak polymorphisms and pseudo-WNU polymorphisms of all arities $n \geq 3$ now follows from~\Cref{theorem:mainthmhyp} in the standard way.

\ignore{
\begin{restatable}{corollary}{pseudownus}\label{corollary:pseudownus}
    Let $\sA$ be a temporal constraint language that is not omni-expressive. If $\sT$ is any finite 
    cyclic and $2$-transitive  hypergraph of arity $n \geq 3$, then $\Pol(\sA)$ satisfies the $\sT$-pseudo-loop condition.
\end{restatable}
\begin{corollary}\label{phylo:hyp:corollary:pseudownus}
    Let $\Gamma$ be a first-order reduct of $\Phylo$ that is not omni-expressive. If $\sT$ is any finite symmetric hypergraph, then $\Pol(\Gamma)$ satisfies the $\sT$-pseudo-loop condition.
\end{corollary}}

\ignore{
\mo{Note: In contrast to temporal structures, we do not need to require arity $n\geq 3$ for hypergraphs. As any non-omni-expressive first-order reduct of $\Phylo$ has a pseudo-symmetric binary polymorphism, all invariant digraphs with a single symmetric edge yield pseudo-loops. As a consequence of \Cref{phylo:theorem:pseudo-symm,phylo:lemma:core_to_struct}, we get}\johanna{Zirkelschluss?}

\begin{corollary}
    Let $\sA$ be a phylogeny constraint language that is not omni-expressive. If $\sG$ is any digraph that is pp-interpretable in $\sA$ and contains a symmetric edge, then $\sG$ contains a pseudo-loop modulo $\Aut(\Phylo)$.
\end{corollary}}

\subsection{ Is \texorpdfstring{$2$}{2}-transitivity the new cyclicity?} 
Let us remark that the conclusion of \Cref{item:thmhyptemp} in \Cref{theorem:mainthmhyp} clearly fails in the case $n = 2$, as it would amount to the existence of a binary polymorphism that is pseudo-commutative -- yet temporal constraint languages preserved by the binary operations $\ell \ell$ or $\mi$ are neither omni-expressive, nor must they admit such polymorphisms.
In the phylogenetic setting, on the other hand, binary pseudo-commutative polymorphisms exist in all non-omni-expressive structures~\cite{Phylo-Complexity}. 

It might be of interest to point out  that pseudo-loop lemmata involving symmetry  conditions provided by  non-trivial group actions have also been established for finite hypergraphs of arity $n \geq 3$ (see~\cite{Cyclic,Zhuk2020StrongSA,Brunar-diplom,zhuksimplifiedproof}).
Most influentially, the study of finite cyclic hypergraphs in~\cite{Cyclic} stands as a landmark result in this context. 
In the temporal setting, cyclicity by itself is not a sufficiently strong condition: as mentioned before, there exist not-omni expressive temporal constraint languages that do not admit any pseudo-cyclic polymorphisms. A concrete example is the structure $(\Q; <, x\neq y \vee u = v)$~\cite[Section 12.9.1]{Book}, highlighting
 the necessity of  the $2$-transitivity condition  on top of the cyclicity condition in \Cref{item:thmhyptemp} of \Cref{theorem:mainthmhyp}.
Observe that these symmetry conditions are proper in the sense that a hypergraph may be
cyclic and $2$-transitive without being fully symmetric. 
For example, for odd $n\geq 5$ the alternating group $A_n$ is a proper 2-transitive subgroup of the full symmetric group $S_n$ which contains all cyclic permutations.  
 Whether or not cyclicity is truly a necessary condition in \Cref{item:thmhyptemp} of \Cref{theorem:mainthmhyp} remains an interesting open question for future work. For hypergraphs of arity $n=3$,  it is readily implied by 2-transitivity, whereas for all arities $n >3$, it is not. Moreover, it remains open whether or not the symmetry condition for the phylogenetic setting in \Cref{item:thmhypphylo} may be relaxed to match the ones in the temporal setting. 

 \subsection{Overview}

This paper is organised as follows: We introduce definitions and notation in~\Cref{sect:prelims}.
\Cref{section:temporal} contains the proofs of
\Cref{theorem:pseudoloop,theorem:mainthmhyp} for temporal constraint languages. In \Cref{sect:outline}, we begin by outlining the proofs and illustrating the  conceptual framework. Both proofs follow a two-step construction. The first step is shown for a representative case in \Cref{section:minclean}; the proofs for all remaining cases are deferred to the appendix. The second step is shown in \Cref{sect:chasingorbits}. Finally, in \Cref{sect:masterproof}, we combine these steps to conclude the proofs in the temporal setting.
The case of phylogeny constraint languages follows  similar ideas and is treated in \Cref{sect:phylo}. We again provide an outline of the proof in \Cref{sect:phylo:outline}. Then, \Cref{sect:phylo:digraphs} establishes the phylogenetic pseudo-loop lemma for digraphs, while \Cref{sect:phylo:hypergraphs} treats the hypergraph case. We close with the proof of \Cref{corollary:siggers} in \Cref{sect:corollaries}.
%
%

\section{Preliminaries}\label{sect:prelims}

\subsection{Model-theoretic}\label{sect:modeltheor} For $n\in \N$, we denote the set $\{1, \dots, n\}$ by $\br n$. For an $n$-tuple $a=(a_1, \dots, a_n)$, we write $a_i$ or $\pr_i(a)$ for its $i$-th coordinate. By $\ker(a)$ we denote the subset of $\br{n}\times \br n$ consisting of all pairs $(i,j)$ for which $a_i=a_j$. 
A \emph{relation} $R$ on a set $A$ is  a subset $R \subseteq A^n$ for some $n \in \N$, which is referred to as the \emph{arity} of $R$.  For a subset $I$ of $\br n$, we denote by $\pr_I(R)$ the $|I|$-ary relation obtained by projecting $R$ to all of its coordinates $i$ with $i \in I$. 
For the purpose of this paper, we assume all relations considered to be non-empty. 
A \emph{(relational) structure}  is a tuple $\sA=(A; \mathcal{ R}) $ consisting of a set $A$ and a finite family 
$\mathcal{R}$ of relations on $A$ indexed by \emph{relational  symbols}. 
For structures $\sA=(A; \mathcal R)$ and $\sB=(B; \mathcal R')$ indexed the same relational symbols, a \emph{homomorphism} from $\sA$ to $\sB$ is a mapping $f:A\to B$ such that for every relation $R\in \mathcal R$ and every $(a_1, \dots, a_n) \in R$, the tuple $(f(a_1), \dots, f(a_n))$ is contained in the corresponding relation of $\mathcal R'$. The structures $\sA$ and $\sB$ are \emph{homomorphically equivalent}  if
there exists homomorphisms both from $\sA$ to $\sB$, and from $\sB$ to $\sA$. An \emph{automorphism} of $\sA$ is a bijective homomorphism $f: A \to A$ 
whose inverse mapping $f^{-1}$ is also a homomorphism. By $\Aut(\sA)$ we denote the automorphism group of $\sA$. In the following, by componentwise application we will understand all mappings also as functions on $n$-tuples. 

A first-order formula $\phi$ is called \emph{primitive positive (pp)} over a  set consisting of
relational  symbols $R_1, \dots, R_n$ 
 if it involves only the predicates $R_i$,  existential quantification, and conjunction. 
A relation is \emph{pp-definable} in a structure $\sA=(A; \mathcal{ R})$ if it is definable by a pp-formula over  
the relational symbols from $\mathcal R$. 
A structure~$\sB$ is pp-definable in $\sA$ if all of its relations are. We say that~$\sA$  \emph{pp-interprets}~$\sB$ if there exist $k \geq 1$ and a partial surjective map $h: A^k \to B$ with the property that for every relation $R \subseteq B^n$ that  is either a relation of $\sB$, the equality relation on
$B$, or $B$, its  preimage $h^{-1}(R)$ -- regarded as a relation of arity $nk$ on $A$ -- is pp-definable in $\sA$.
If $\sA$  \emph{pp-interprets} $\sB$, then $\CSP(\sB)$ is log-space reducible to $\CSP(\sA)$~\cite{JBK}. 
We say that $\sA$ \emph{pp-constructs} $\sB$ if $\sB$ is homomorphically equivalent to a structure that is pp-interpretable in $\sA$.
As homomorphically equivalent structures have the same CSPs, this also implies a log-space  reduction from $\CSP(\sB)$ to $\CSP(\sA)$. 
We say that $\sA$ pp-constructs \emph{EVERYTHING} if $\sA$ pp-constructs every finite structure; we then call $\sA$ \emph{omni-expressive}. 
Clearly, the CSP induced by an omni-expressive structure is NP-complete. By the finite-domain CSP dichotomy~\cite{BulatovFVConjecture,ZhukFVConjecture,Zhuk20}, the CSP of a finite structure that is not omni-expressive is always solvable in polynomial time.

A permutation group $\Omega$ acting on a set $A$ induces an equivalence relation on $A$: two elements $a,b \in A$ are equivalent if there is $\alpha \in \Omega$ such that $\alpha(a)=b$. The corresponding equivalence classes are called the \emph{orbits} of $\Omega \acts A$, or simply \emph{$\Omega$-orbits}. By componentwise evaluation, $\Omega$ acts on $A^k$ for every $k\in \N.$  The group is \emph{oligomorphic} if this  action has finitely many orbits for every $k \in \N$.   A countable structure $\sA$ is \emph{$\omega$-categorical} if $\Aut(\sA)$ is oligomorphic.

\subsection{Graph-theoretic}
A \emph{digraph}  is a relational structure of the form $\sG=(G; E) $, where~$E$ is binary. 
$\sG$ is called \emph{smooth}  if $\pr_1(E)=\pr_2(E)$. In this case, we also write $\supp(\sG)$ for the set  $\pr_1(E)$.
If $E$ is symmetric, $\sG$ is also called a \emph{graph}. For $m \in \N$, we denote by $E^m$ the binary relation  containing all tuples $(a^0, a^m)
\in G\times G$ for which there exist $a^1, \dots, a^{m-1}\in G$ such that $(a^{i-1}, a^i) \in E $ for all $i \leq m$. We write $E^{-1}$ 
for the relation $\{(b,a):(a,b) \in E\}$. 
An \emph{$E$-walk}  from $a^0$ to $a^n$ is a finite sequence $p=(a^0 E_1 a^1 E_2 a^2 \dots E_n a^n)$, where  $E_i \in \{E, E^{-1}\}$ and $(a^{i-1}, a^i)\in E_i$ for all $i$. We say that $p$ is \emph{closed} if $a^0=a^n$. The \emph{algebraic length} of $p$ is the number of occurrences of $E$ minus the number of occurrences of $E^{-1}$. A digraph $\sG=(G; E) $ is said to have \emph{algebraic length $1$} if there exists  a closed $E$-walk of algebraic length $1$. 
A digraph $\sG$ is \emph{weakly connected} if for all distinct  $a, b \in G$ there exists an $E$-walk from $a$ to $b$. 
A \emph{fence} in $E$ from $a^0$ to $a^{2n}$ is an $E$-walk of the form $(a^0\dots a^{2n})$ where 
 $(a^{2i-2},  a^{2i-1}) \in E $ and $(a^{2i-1},  a^{2i}) \in E^{-1}$ for all $i \leq n$.  The vertices   $a^0, a^2, \dots, a^{2n}$ are called the \emph{lower tips} of $p$. For $m\in \N$,  a fence in $E^m$ is also called an \emph{$m$-fence} in 
$E$. We say that  $\sG$ is \emph{linked} if there exists $m\in \N$ such that for any $a,b \in  \pr_1(E)$ 
there exists an $m$-fence in $E$ 
from $a$ to $b$. 
A finite smooth digraph is linked if and only if it is weakly connected and has algebraic length~$1$~\cite[Claim 3.8]{Cyclic}. 

A \emph{hypergraph} is any structure $\sT=(T; R)$ containing only one relation $R$, where~$R$ is not necessarily binary. We call elements of~$R$ \emph{hyperedges}. Let $n$ be the arity of $R$, and let $\Sigma$ be a subgroup of the symmetric group $\textnormal{Sym}(n)$ on $n$ elements.  
We say that $\sT$ is \emph{$\Sigma$-invariant} if for every $a\in R$ and $\pi \in \Sigma$ also $(a_{\pi(1)}, \dots,a_{\pi(n)}) \in R$.  A  hypergraph is 
\emph{cyclic} if it is invariant with respect to the group of cyclic shifts, and \emph{$2$-transitive} if it is $\Sigma$-invariant for some $2$-transitive group $\Sigma$.
         A \emph{loop} is a constant tuple $(a, \dots, a) \in R$.

\subsection{Algebraic}\label{sect:algebraic}
A \emph{polymorphism} of arity $n \geq 2$ of a structure $\sA=(A;\mathcal R)$ is a mapping $f:A^n \to A$ that \emph{preserves} every $R\in \mathcal{R}$,~i.e.~for every $R \in \mathcal{R}$ and any $t^1, \dots, t^n \in R$, the tuple $f(t^1, \dots, t^n)$ is contained in $R$. 
For an $\omega$-categorical structure $\sA$, a relation $R$ is pp-definable in $\sA$ if and only if it is preserved by all polymorphisms of $\sA$~\cite{BodirskyNesetrilJLC}.
The \emph{polymorphism clone} of $\sA$ is the set of all polymorphism of $\sA$, and it is denoted by $\Pol(\sA)$. Observe that it indeed forms a \emph{clone} in the sense of universal algebra,~i.e.\;it is closed under composition and contains all projections.

An \emph{identity} is a formal abstract expression of the form $f \approx g,$ where $f$ and $g$ are terms over a common functional language. 
A clone $\clone C$ of operations on $A$ is said to \emph{satisfy} a given identity $f(x_1, \dots, x_n) \approx g(y_1, \dots, y_n)$ if the function symbols appearing in the terms $f$ and $g$ can be interpreted as elements of $\clone C$, such that the equality $f(x_1, \dots, x_n)=g(y_1, \dots, y_n)$ holds for any evaluation of the variables in $A$.
The satisfaction of a set of identities is defined as the simultaneous satisfaction of all identities by means of the same interpretation of function symbols.
The \emph{pseudo-version} of an identity  $f\approx g$  is the identity $u\circ f\approx v\circ g$, where $u$ and $v$ are fresh unary function symbols.

Given a finite hypergraph  $\sT$ of some arity $n\geq 2$ and an enumeration $(a^1_{1}, \dots, a^1_{n} ), \dots,$ $ (a^m_{1}, \dots, a^m_{n} )$ 
of its edges,  the \emph{$\sT$-loop condition} is the $(n-1)$-element set consisting of the  identities 
\[s(x^1_{1}, \dots, x^m_{1}  ) \approx s(x^1_{ 2}, \dots, x^m_{2}  ) \approx \dots \approx s(x^1_{n}, \dots, x^m_{n} ).\]
 Similarly, the \emph{$\sT$-pseudo-loop condition} is the set containing the pseudo-identities 
  \[u_1 \circ s(x_{1, 1}, \dots, x_{1, m}  ) \approx u_2 \circ s(x_{2, 1}, \dots, x_{2, m}  ) \approx \dots \approx u_n \circ s(x_{n, 1}, \dots, x_{n, m} ).\]

\section{Temporal constraint languages}\label{section:temporal}

\begin{figure*}  
    \centering
    
    \caption{Polymorphisms of temporal constraint languages}
    \label{fig:classification}
    \vspace{\baselineskip}

    \begin{tikzcd}[column sep=small]
        & & & & {\Aut(\Q)} & & \\
        & & pp \arrow[-,rru] & & \dual(pp) \arrow[-,u] & & \lex \arrow[-,llu] \\
        \min \arrow[-,rru] & \mi \arrow[-,ru] & \mx \arrow[-,u] & \max \arrow[-,ru] & \dual(\mx) \arrow[-,u] & \dual(\mi) \arrow[-,lu] & \ell \ell \arrow[-,u] & \dual(\ell \ell) \arrow[-,lu]
    \end{tikzcd}
\end{figure*}

A \emph{temporal relation} is a relation first-order definable in $(\Q; <)$. A \emph{temporal constraint language} is a relational structure with domain~$\Q$ all of whose relations are temporal. Clearly, the natural action of $\Aut(\Q)\coloneqq\Aut((\Q; <))$ on $\Q^k$ has finitely many orbits for every $k\in \N$. 
The automorphisms of a structure preserve all  relations first-order definable in it. 
As a consequence, structures whose relations are first-order definable in an $\omega$-categorical structure, 
%
and in particular all temporal constraint languages, are themselves $\omega$-categorical.

Let $F$ be a set of operations on $\Q$. We say that the clone \emph{generated} by $F$ is the smallest clone  of operations $\clone C$ that contains $F\cup \Aut(\Q)$ and is closed under \emph{interpolation}; that is, an operation $g$ belongs to $\clone C$ if and only if for every finite subset $A\subseteq \mathbb Q$, there is $f\in \clone C$ agreeing with $g$ on $A$. In the case that $F$ contains only one operation $f$, we say that $f$ generates $g$. The clones generated by an operation $f$ and a set $F$ of operations are denoted by $\gen f$ and $\gen F$, respectively.   For $S\subseteq \Q^k$ and an operation $f$ on $\Q$, we denote by $\genrel{S}{f}$ the smallest subset of $\Q^k$ that contains $S$ and is preserved by every operation of $\gen{f}.$
 
The \emph{dual} of an $n$-ary operation $f$  on $\Q$ is the operation $\dual(f)$ defined by  
\[\dual(f)(x_1, \dots, x_n)\coloneqq-f(-x_1, \dots , -x_n).\]
For a set $F$ of operations, $\dual(F)$ denotes the set containing all duals of operations in $F$.
It is not hard to see that an operation $f$ preserves a relation $R$ if and only if its dual preserves the relation $-R=\{(-a_1, \dots, -a_n): (a_1, \dots, a_n)\in R\}$~\cite{bodirsky_complexity_2010}.
Moreover, as the dual of an automorphism of $(\Q;<)$ is again an automorphism, for every operation $f$ we have $\gen{\dual(f)}=\dual(\gen f)$.

The Bodirsky-K\'ara classification (\Cref{theorem:tractable_pols}) identifies all not omni-expressive temporal constraint languages by the existence of specific polymorphisms, whose definitions are provided below. Since a temporal relation is preserved by an operation $f$ if and only if it is preserved by every operation in the clone generated by $f$, we do not need to distinguish between operations that generate the same clone. \Cref{fig:classification} provides an illustration of the classification.

\begin{definition}\label{def:mi_mx} The binary operation $\min:\Q^2\to \Q$ maps two values $x$ and $y$ to the smaller of the two values. 
    Let $\alpha, \beta, \gamma$ be any endomorphisms of $(\Q; <)$ satisfying for all $\epsilon >0, x \in \Q$ 
    \begin{equation*}
         \alpha(x)<\beta(x) < \gamma(x) < \alpha(x+\epsilon).
    \end{equation*} 
    The two operations $\mi$ and $\mx$ are defined as
    \begin{align*}
        \mi(x,y)\coloneqq &\begin{cases}
            \alpha(\min(x,y)) & \textnormal{if} \ x = y \\ \beta (\min(x,y)) & \textnormal{if} \ x<y \\ \gamma (\min(x,y)) & \textnormal{if} \ x>y,
        \end{cases}
        \textnormal{ and } \\
        \mx(x,y)\coloneqq &  \begin{cases}
        \alpha(\min(x,y)) & \text{if } x\neq y, 	\\ \beta(\min(x,y)) & \text{if } x= y.
        \end{cases}
    \end{align*}
\end{definition}

As explained in~\cite{bodirsky_complexity_2010}, such endomorphisms can be constructed inductively. 
In this construction, one can easily enforce both that any of the endomorphisms has finitely many specified fixed points, or none at all.
All operations that have the listed properties of $\mi$ generate the same clone, and the same holds for $\mx$. 
%
%
%
Note that our definition of $\mi$ differs slightly from the original one in~\cite{bodirsky_complexity_2010}. 
Operations $\mi$ defined either way generate the same clone, and in particular each other.
%
Our definition of $\mi$ is in line  with~\cite{RydvalDescr}. 

    \begin{definition}
    For $q \in \Q$, let $\llp q: \Q^2 \to \Q$ be a binary operation satisfying $\llp q(x,y) \leq \llp q(x', y')$ if and only if one of the following cases applies: \begin{itemize}
        \item $x \leq q \ \& \ x < x' $
        \item $x \leq q \ \& \ x= x' \ \& \ y < y' $
        \item $x, x' > q \ \& \ y < y'$
        \item $x, x' > q \ \& \ y= y' \ \& \ x < x' $
    \end{itemize}
\end{definition}

Such an operation exists, is injective by definition, and all operations satisfying these conditions generate the same clone.
Clearly, $\llp q$  generates $\llp p$ for $p \neq q$. By~$\ell \ell$ we denote the operation $\llp 0$.

\begin{theorem}[Bodirsky-K\'ara classification~\cite{bodirsky_complexity_2010}]\label{theorem:tractable_pols}
    Let $\sA$ be a temporal constraint language. 
    If $\sA$ is not omni-expressive, then it is preserved by one of  $\min$, $\mi$, $\mx$, $\ell\ell$, their duals, 
     or a constant operation, and $\CSP(\sA)$ is solvable in polynomial time. 
\end{theorem}

For the proofs of the temporal cases of \Cref{theorem:pseudoloop,theorem:mainthmhyp}, we will additionally draw on two binary operations $\lex$ and $pp$. As $\ell \ell$ and $\dual(\ell \ell)$ generate  $\lex$, and each of $\min, \mi,$ and $\mx$ generates $pp$, it follows that one of $\lex$, $pp,$ and $\dual(pp)$ is contained in the polymorphism clone of every not omni-expressive constraint language.  It is worth noting, however, that the presence of 
$\lex$ or $pp$ in a polymorphism clone does not, on its own, imply polynomial-time solvability of the underlying template. For example, the temporal constraint language modelling the classical \emph{Betweenness problem} is preserved by $\lex$, yet is NP-complete~\cite{Opatrny}. 

\begin{definition}
   Let $\lex:\Q^2\to \Q$ be a binary operation satisfying
    $\lex(x,y)\leq \lex(x', y')$ if and only if 
    \begin{itemize}
        \item $x < x'$ or
        \item $x= x' \ \& \ y < y'$.
    \end{itemize} 
   
\end{definition}

Clearly, such an operation exists. It is easy to see that all operations with these properties generate the same clone~\cite[Observation 1]{bodirsky_complexity_2010}.
As $\ell\ell$ and  $\dual (\ell \ell)$ also satisfy them when restricted to $\{(x,y)\in\Q\times\Q\colon x,y<0\}$ and $\{(x,y)\in\Q\times\Q\colon x,y>0\}$, respectively, both of them generate $\lex$. %
Note that lex is, by definition, injective. 

\begin{definition}
    For $q \in \Q$, let $\pp{q}: \Q^2 \to \Q$ be a binary operation satisfying $\pp{q}(x,y)\leq \pp{q}(x', y')$ if and only if \begin{itemize}
        \item $x \leq q$ and $x \leq x'$ or
        \item $x, x' > q$ and $y \leq y'$.
    \end{itemize}
\end{definition}

Again, $\pp{q}$ generates $pp_p$ for $p \neq q$.
By $pp$, we denote the operation $\pp{0}.$
 It is generated from $\min, \mx,$ and $\mi$~\cite[Lemma 23]{bodirsky_complexity_2010}.

For every $k\in \N$, we denote by $\sim_k$ the orbit-equivalence of $\Aut(\Q)\acts \Q^k$. In other words, 
for $a,b\in\Q^k$ we have $a\sim_k b$ if and only if $a_i\leq a_j\leftrightarrow b_i\leq b_j$ for all $i, j \leq k$.  A \emph{pseudo-loop} in an $n$-ary relation $R$ on $\Q^k$ is a tuple $(a_1, \dots, a_n)\in R$ with $a_1 \sim_k \dots \sim_k a_n.$ The \emph{$\sim_k$-factor} of $R$ is the relation defined on the set of all ${\sim_k}$-classes of $\Q^k$ that contains a tuple $(A_1, \dots, A_n)$ of $\sim_k$-classes if and only if there exists a tuple $(a_1, \dots, a_n)\in R$ such that $a_i \in A_i$ for every $i \in \br n.$ 
We say that a binary relation $E$ on $\Q^k$ has \emph{pseudo-algebraic length~$1$} if the $\sim_k$-factor of $E$ admits a closed walk of algebraic length $1$.
Whenever convenient, we naturally consider $n$-tuples on $\Q^k$ as tuples on $\Q^{nk}.$

Finally, we note that if $\sG$ is pp-interpretable in a temporal constraint language~$\sA$, then every 
polymorphism of $\sA$ induces a polymorphism of $\sG$~\cite{JBK}: 
this follows from the fact that we can regard~$\sG$ as a structure whose domain is  a pp-definable subset of some power of~$\Q$ factored by  a pp-definable equivalence relation, with all its relations being pp-definable as well. In particular, this is true for the automorphisms of $(\mathbb Q; <)$, hence $\Aut(\mathbb Q)\acts \mathbb G$.
If $\sG=(G;E)$ is a digraph, a \emph{pseudo-loop in $\sG$ modulo $\Aut(\Q)$} is  an edge
whose vertices are contained in the same orbit of $\Aut(\mathbb Q)\acts \mathbb G$. We say that  
$\sG$ has \emph{pseudo-algebraic length $1$ modulo $\Aut(\Q)$} if there exists an $E$-walk of algebraic length $1$ so that its start point and its end point are contained in the same $\Aut(\Q)$-orbit.  
Similarly, a pseudo-loop modulo $\Aut(\Q)$ in an $n$-ary hypergraph $\sT=(T;R)$ that is pp-interpretable in a temporal constraint language is an edge $(a_1, \dots, a_n)\in R$ such that all $a_i$ belong to the same $\Aut(\Q)$-orbit.

\subsection{Proof outline}\label{sect:outline}
\begin{figure*}[t]
 \caption{Transferring along fences}
    \label{fig:fence}
\centering
\vspace{\baselineskip}
\begin{tikzpicture}[>=stealth,scale=1]

\node (x00) at (0,0) {$a=a^{0,0}$};
\node (x10) at (4,0) {$a^{1,0}=a^{2,0}$};
\node (x30) at (8,0) {$a^{3,0}=b'$};

\node (x01) at (1,1) {$a^{0,1}$};
\node (x11) at (3,1) {$a^{1,1}$};
\node (x21) at (5,1) {$a^{2,1}$};
\node (x31) at (7,1) {$a^{3,1}$};

\node (x02) at (2,2) {$a^{0,2}=a^{1,2}$};
\node (x32) at (6,2) {$a^{2,2}=a^{3,2}$};

\draw[->] (x00) -- (x01);
\draw[->] (x10) -- (x11); 
\draw[->] (x10) -- (x21);
\draw[->] (x30) -- (x31);

\draw[->] (x01) -- (x02);
\draw[->] (x11) -- (x02);
\draw[->] (x21) -- (x32);
\draw[->] (x31) -- (x32);

\end{tikzpicture}

\end{figure*}
While the proofs in this section involve technical details that may appear discouraging at first sight, the following section seeks to convey the intuition behind the main ideas. We hope that this discussion, together with the accompanying figures, might help guide the reader's orientation.

\subsubsection{Chasing orbits}
Recall that a pseudo-loop in an $n$-ary relation $R $ on $ \Q^k$ is a tuple $(a_1, \dots, a_n)\in R$ whose entries $a_1, \dots, a_n\in \Q^k$ share the same relative ordering. For the proofs of \jo{the temporal parts of }\Cref{theorem:pseudoloop,theorem:mainthmhyp}, pseudo-loops are constructed via a recursive procedure that, at each step, aligns the minimal elements of appropriately chosen tuples. After at most $k$-many steps, this recursion yields a tuple all of whose coordinates lie in the same orbit of $\Aut(\Q)\acts \Q^k$,~i.e.~a pseudo-loop in~$R$.

We only consider relations $R$ on some power of $\Q$ that are preserved by one of the operations $\min, \mi, \mx,$ or $\ell \ell$. The general case for relations pp-interpretable in temporal constraint languages  easily reduces to this setting. 
In what follows, we fix an integer $k \geq 2$. 

\begin{definition}
    \begin{itemize}
        \item For every $a \in \Q^k$, we denote by $\min(a)$  the minimal entry 
    of $a$. By $\minx(a)$ we denote the set of all indices $i \leq k$ for which $a_i=\min(a) $.
    \item For all $a,b \in \Q^k$ and $I\subseteq \br k$, we write $a\sim_I b$ if  $\pr_I(a)\sim_{|I|} \pr_I(b)$.
        \item For $1\leq m\leq k$ and $a\in \Q^k$, we set $I_m(a)$ for the set of those indices $i$ such that $a_i$ is among the $m$ smallest values that appear in $a$.
   \end{itemize}
   \end{definition}

 Given an $n$-ary relation $R$ on $\Q^k$,  we iteratively find tuples  $\tup^m \in R$ so that, for each $m\leq k$ and for all $i,j \in \br n$, \begin{enumerate}   
 \item\label{item:smallest_same_coords} $I_m(\tup_i^m)=I_m(\tup_j^m)$,~i.e.~the $m$ smallest values of  $\tup_i^m$ and $\tup_j^m$ are taken on the same set of coordinates;
 \item\label{item:smallest_same_orbit} $\tup^m_i\sim_{I_m(\tup^m_i)} \tup^m_j$,~i.e.~the subtuples  of $\tup^m_i$ and $\tup^m_j$ containing precisely the $m$ smallest values of  $\tup^m_i$ and $\tup^m_j$, respectively,   are contained in the same $\sim_m$-class; 
 \item\label{item:smallest_m+1} $\pr_{I_m(\tup^{m+1}_i )} (\tup^{m+1}_i ) \sim_{m} \pr_{I_m(\tup^{m}_i )}( \tup^{m}_i)$,~i.e.~the $\sim_m$-class of the subtuple of $\tup^{m+1}_i$  containing the $m$ smallest entries of $\tup^{m+1}_i$  coincides with  the  $\sim_m$-class of the subtuple of $\tup^{m}_i$ containing all of its  $m$ smallest entries.
\end{enumerate}    
At the end of this procedure, the tuple $\tup^k\in R$ clearly serves as the desired pseudo-loop in~$R$.
\subsubsection{Aligning minima}

Finding a tuple $\tup^1 \in R$ satisfying conditions (\ref{item:smallest_same_coords}) and (\ref{item:smallest_same_orbit})
for $m=1$ is the same as finding a tuple $\tup^1 \in R$ satisfying $\minx(\tup^1_i)=\minx(\tup^1_j)$ for all $i,j \in \br{n}$. The construction of such a tuple relies on the following easy observation:
let   $a \in \Q^k$,  $q \in \Q$, and define $J\coloneqq\{i \in \br k: a_i \leq q\}$. For every $b \in \Q^k$, the $\sim_k$-class of the $k$-tuple $\pp{q}(a,b)$  is  determined only by the $\sim_{|J|}$-class  of the $|J|$-tuple $\pr_J(a)$ and the $\sim_{(k-|J|)}$-class  of the $(k-|J|)$-tuple $\pr_{\br k \setminus J}(b)$. 
The same holds true for the tuple $\llp{q}(a,b)$, provided that $\ker(a)=\ker(b)$. To make use of this observation, we restrict ourselves to tuples in $R$ satisfying the following property:

\begin{definition} Let $\tup$ be an $n$-tuple on $\Q^k$. By $M(\tup)$ 
we denote the set consisting of all $i \in \br n$ for which $\min(\tup_i)=\min(\tup)$.  We say that~$\tup$ is \emph{min-clean} if for all $i,j \in M(\tup)$ we have $\minx(\tup_i)=\minx(\tup_j)$.
\end{definition}

Let us lay out how a min-clean tuple $\tup$ in a relation $R$ preserved by~$pp$ can be used to derive $\tup^1$ as above. Recall that a relation preserved by one of $\min, \mi,$ or $\mx$ is always preserved by $pp$. The case of a relation preserved by $\ell \ell$ works similarly. \\

\textbf{Cut at minimum:}
Let  $\tup$ be a min-clean $n$-tuple, $M\coloneqq \minx(\tup_i)$ for $i \in M(\tup)$, and set $q\coloneqq\min(\tup)$. Take an arbitrary $n$-tuple $s$ and let $\tup^1 \coloneqq pp_q(\tup,s)$. For all $i\in M(\tup)$ we have $\minx(\tup^1_i)=M$, and for $j \notin M(\tup)$ we have $\minx(\tup^1_j)=\minx(s_j)$. Consequently, it suffices to take $s\in R$ with $\minx(s_j)=M$ for all $j \notin M(\tup)$.  Such a tuple $s\in R$ exists provided  appropriate symmetry properties of $R$. 

\subsubsection{Pseudo-algebraic length  \texorpdfstring{$1$}{1}}\label{sect:al1}

The technical part of the proofs of \Cref{theorem:pseudoloop,theorem:mainthmhyp} in the temporal case concerns the existence of min-clean tuples. 
We illustrate the representative case of a binary relation $S$ on $\Q^k$ that is preserved by $\mi$. 
We assume  $S$ to be smooth, weakly connected, and of pseudo-algebraic length $1$. \\

\textbf{Pseudo-linkedness:}
Recall that a finite smooth weakly connected digraph has algebraic length $1$ if and only if it is $m$-linked for some $m\geq 1$~\cite[Claim~3.8]{Cyclic}.  The $k$-factor of~$S$ gives rise to a finite digraph on $\sim_k$-classes that is smooth, weakly connected, and of algebraic length $1$. Thus, it is $m$-linked for some $m\geq 1$. This means that for all $a,b \in \supp(S)$, there exist $b' \sim_k b$ and an $m$-fence in~$S$ connecting $a$ to $b'$. \Cref{fig:fence} depicts a $2$-fence 
 from $a$ to $b'$ consisting of $3$ lower tips, where an edge from a vertex $x$ to a vertex $y$ represents $S(x,y)$.  Observe that if $b\sim_k b'$, then $\minx(b)=\minx(b')$. \\

\textbf{Controlling minimal index sets:}
Let $a,b \in \supp(S)$ be tuples sharing the same minimal entry (as it turns out, this can always be assumed by connectivity of $S$
). 
Observe that if 
\begin{equation}\label{eq:minxintersection}
    \minx(a)\cap \minx(b)\neq \varnothing,
\end{equation}
then $\minx(\mi(a,b))=\minx(a)\cap \minx(b)$. In other words, condition (\ref{eq:minxintersection}) guarantees control over the minimal index set of $\mi(a,b)$. \\

\textbf{Transferring minimal index sets along fences (\Cref{lemma:mi:minx_int2}):} There exist $b'\sim_k b$ and an $m$-fence in the $\sim_k$-factor of $S$ connecting $a$ to $b'$ (\Cref{fig:fence}. We assumed $m=2$ for simplicity). Since the two (identical) tuples $a^{0,2}$ and $a^{1,2}$ tuples satisfy \Cref{eq:minxintersection}, so do the tuples $a^{0,1}$ and $a^{1,1}$, and consequently also $a^{0,0}$ and $a^{1,0}$ (\Cref{l:mihypint}). Iterating this argument, we get that $a$ and $b'$ satisfy \Cref{eq:minxintersection}. It follows that $a$ and $b$ do, as $\minx(b)=\minx(b')$. \\

\textbf{Equalising minimal index sets (\Cref{lemma:mi:minclean}):} Using a nested composition of $\mi$, we construct a tuple  of the form $\mi(\tup,s)$, where $\minx(\tup_1)\cap \minx(s_1)=\minx(\tup_2)\cap\minx(s_2)\neq\varnothing$. In particular, this tuple is min-clean.

\subsubsection{Summary}
The proofs of \Cref{theorem:pseudoloop,theorem:mainthmhyp} split into the cases set by the classification of temporal constraint languages (\Cref{fig:classification}). It suffices to deal with relations~$R$ defined on some power of $\Q$ that are preserved by one of the operations $\min, \mi, \mx,$ or $\ell \ell$. In every case, the proof consists of two parts: \begin{enumerate}
    \item show that $R$ contains a min-clean tuple
    \item apply a nested composition of operations to min-clean tuples to recursively construct a pseudo-loop in $R$. 
\end{enumerate}
Step (1) is treated in \Cref{section:minclean}, step (2) in \Cref{sect:chasingorbits}. In \Cref{sect:masterproof}, it remains to combine these steps to derive the main theorems. 
\Cref{fig:structure} provides a graphic overview of the proofs' structure. 

To simplify notation, for a binary operation $f$ and $m\geq 2$, we denote by $\nested{f}{m}$ the $m$-ary operation defined as \[\nested{f}{m}(x_1,\dots,x_m)\coloneqq f(x_1,f(x_2,\dots,f(x_{m-1},x_m)\dots)).\] Note that $\nested{f}{m}$ is lies within every clone containing $f$.

\begin{figure}[t]
 \caption{Proof structure of \Cref{theorem:pseudoloop,theorem:mainthmhyp} for temporal constraint languages}
    \label{fig:structure}
\centering
\vspace{\baselineskip}

\begin{tikzpicture}[box/.style={
  draw,
  rectangle,
  align=center,
  inner sep=2pt,
  text width=2.4cm,
minimum height=1.6cm},
  arr/.style={-Stealth, thick, shorten >=2pt, shorten <=2pt} 
]
\tikzset{
  phantombox/.style={
    rectangle,
    minimum width=2cm,
    minimum height=1cm,
    align=center,
    draw=none
  }
}

\node[box] (min) {$\boldsymbol{\min}$ \\ \Cref{lemma:min:minclean} \\ \Cref{lemma:hypmin:minclean}};
\node[box, right=0.2cm of min] (mi) {$\boldsymbol{\mi}$ \\ \Cref{lemma:mi:minclean} \\ \Cref{l:hypmi:minclean}};
\node[box, right=0.2cm of mi] (mx) {$\boldsymbol{\mx}$ \\ \Cref{lemma:mx:minclean}\\ \Cref{l:hypmx:minclean}};
\node[box, right=0.2cm of mx] (ll) {$\boldsymbol{\lex}$ \\ \Cref{lemma:lex:minclean} \\ \Cref{l:hyplexminclean}};

\draw[thick,
  decorate,
  decoration={amplitude=5pt}
]
($(min.south)-(0,0.4cm)$) --
($(mx.south)-(0,0.4cm)$);

\draw[thick,
  decorate,
  decoration={amplitude=5pt}
]($(min.south)-(0,0.2cm)$) --
($(min.south)-(0, 0.4cm)$);

\draw[thick,
  decorate,
  decoration={amplitude=5pt}
]($(mx.south)-(0,0.2cm)$) --
($(mx.south)-(0, 0.4cm)$);

\node[phantombox, below=1.4cm of min] (pp1) { \\ };

\node[box, below=1.4cm of mi] (pp2) {$\boldsymbol{pp}$ \\ \Cref{cl:pp:slice}\\ \Cref{l:hypslice}};

\node[box, below=1.4cm of ll] (lll) {$\boldsymbol{\ell \ell}$ \\ \Cref{cl:ll:slice} \\ \Cref{l:ll:hypslice}};

\draw[arr,shorten <=0pt] ($(mi.south)-(0,0.4cm)$) -- (pp2.north);
\draw[arr] ($(ll.south)-(0,0.15cm)$) --(lll.north);

\node[left=0.3cm of min.west, align=center] (step1) {Step ($1$): \\ min-clean \\ tuples};

\node[below=1.4cm of step1.south, align=center] {Step ($2$): \\ recursion};

\end{tikzpicture}
\end{figure}

\subsection{Step \texorpdfstring{$1$}{1}: Min-clean tuples}\label{section:minclean}
Recall the definition of a min-clean tuple: for an $n$-ary relation $R$  on $\Q^k$, we say that $t \in R$ is \emph{min-clean} if $\minx(t_i)=\minx(t_j)$ for all $i,j \in M(t)$, where 
\[M(t)\coloneqq\left\{i \in \br n \colon \min(t_i)=\min(t)\right\}.\]
Moreover, a tuple $t \in R$ is \emph{min-ready} in $R$ if the set $M(t)$ is minimal with respect to inclusion amongst all other sets $M(s)$ for $s \in R$. Clearly, every tuple $t \in R$ that satisfies $|M(t)|=1$ is both min-clean and min-ready in $R$. Min-ready tuples turn out useful for proving the existence of min-clean tuples. 
\jo{Below, we prove the existence of min-clean tuples in relations $R$ preserved by $\mi$. The existence of min-clean tuples in relations preserved by $\min, \mx, $ and $\ell \ell$ is shown using similar techniques. }

\subsubsection{Min-clean tuples for \texorpdfstring{$\mi$}{mi}}\label{subsect:mi}

The following technical lemma holds true for both digraphs and hypergraphs. 

\begin{restatable}{lemma}{mihypint}
    \label{l:mihypint}
Let $n \geq 2$, and let $R$ be an $n$-ary relation on $\Q^k$ that is preserved by $\gen\mi$. If $t^1 \in R$ is  min-ready, and $t^2, \dots, t^{m} \in R$ satisfy \begin{itemize}
    \item $i \in M(t^1) \cap \dots \cap M(t^{m}),$
    \item $\minx(t^1_i) \cap \dots \cap \minx( t^{m}_i )\neq \varnothing  $,
    \item $\min(t_i^1)=\dots=\min(t_i^m)$, and
    \item $j\in M(t^1)$,
\end{itemize}   then  $j \in M(t^1)\cap\dots\cap M(t^m)$ and $\minx(t^1_j) \cap \dots \cap \minx(t^{m}_j) \neq \varnothing.$
\end{restatable}
\begin{proof}
    Let $q=\min(t_i^1)$, and choose mappings $\alpha, \beta$ and $\gamma$ satisfying the requirements in~\Cref{def:mi_mx} in a way such that $\alpha(q)=q$.
    Consider the tuple $s\coloneqq\nested{\mi}{m}(t^1, \dots, t^m)$.
    Note that $\minx(s_i)=\minx(t_i^1)\cap\dots\cap\minx(t_i^m)$, and that $\min(s_i)=q$.
    By min-readiness of $t^1$, we must also have $\min(s_j)=q$. 
    As $\beta(q),\gamma(q)>\alpha(q)=q$, we therefore obtain $j \in M(t^1)\cap\dots\cap M(t^m)$ and  $\minx(t_j^1)\cap \dots \cap\minx(t_j^m)\neq \varnothing $.
\end{proof}

We now present the core construction for smooth digraphs of pseudo-algebraic length $1$.
\begin{lemma}\label{lemma:mi:minx_int2}
Let $S$ be a binary relation on $\Q^k$. If $S$ is smooth, weakly connected, and of pseudo-algebraic length $1$, then either $\genrel{S}{\mi}$ contains a tuple $t$ with $|M(t)|=1$, or $\bigcap_{a\in \supp{S}}\minx(a)\neq\varnothing$. 
\end{lemma}

\begin{proof} 
Let $R\coloneqq\genrel{S}{\mi}$, and suppose that $|M(t)|=2$ for all $t \in R$.  In particular, all elements of $R$ are min-ready. Moreover, observe that $R$ is weakly connected. It follows that $\min(t_1)=\min(t_2)$ for all $t \in R$. We may therefore apply \Cref{l:mihypint} to any tuples $t^1, \dots, t^m \in R$ satisfying either $\minx(t^1_1)\cap\dots\cap\minx(t^m_1)\neq \varnothing$ or $\minx(t^1_2)\cap\dots\cap\minx(t^m_2)\neq \varnothing$.

    Now, take $a,b\in \supp(S)\subseteq \supp(R)$ arbitrarily.
    Since $R$ is smooth, weakly connected, and has pseudo-algebraic length~$1$, its $\sim_{k}$-factor  is  $m$-linked for some $m\in \N$. Therefore, there exist $b' \sim_k b$ and an $m$-fence in $R$ from $a$ to~$b'$,~i.e.~$R$-walks of the form
    \[ \begin{aligned}
        (&c^{1,0}\ R \ c^{1,1}\ R\dots R \ c^{1,m}=c^{2,m} \\
        &c^{2,m}\ R^{-1} \ c^{2,m-1}\ R^{-1}\dots R^{-1} \ c^{2,0}= c^{3,0} \\
        &\vdots \\
        &c^{n,m}\ R^{-1}\ c^{n,m-1}\ R^{-1}\dots R^{-1} \ c^{n,0}),
    \end{aligned}\]
    \ignore{\[ \begin{aligned}
        (&x_{1,0}R x_{1,1}R\dots R x_{1,m}=x_{2,m} \\
        &x_{2,m}R^{-1} x_{2,m-1}R^{-1}\dots R^{-1} x_{2,0}= x_{3,0} \\
        &\vdots \\
        &x_{n,m}R^{-1}x_{n,m-1}R^{-1}\dots R^{-1}x_{n,0}),
    \end{aligned}\]}
    %
    where $a=c^{1,0}$ and $b'=c^{n,0}$. 
    Since $b' \sim_k b$,  
    we have $\minx(b)=\minx(b')$. We claim that there is a coordinate on which every lower tip of the above $m$-fence admits its minimal entry, that is
    \begin{equation}\label{eq:fenceminx}
        \minx(a)\cap\bigcap_{1<\ell < n}\minx(c^{ \ell, 0})\cap\minx(b')\neq \varnothing.
    \end{equation} 
    Indeed, applying \Cref{l:mihypint} to the tuples $\binom{c^{1,m-1}}{c^{1,m}}$ and $\binom{c^{2,m-1}}{c^{1,m}}$, we obtain $\minx(c^{1,m-1})\cap\minx(c^{2,m-1})\neq\varnothing$. 
    Consequently, we also get $\minx(c^{1,m-2})\cap\minx(c^{2,m-2})\neq\varnothing$, and, after $m$ steps, $\minx(a)\cap\minx(c^{2,0})\neq\varnothing$.
    We now apply \Cref{l:mihypint} to the tuples $\binom{a}{c^{1,1}},\binom{c^{2,0}}{c^{2,1}}$ and $\binom{c^{2,0}}{c^{3,1}}$ of $R$, and deduce that also $\minx(c^{1,m})\cap\minx(c^{3,m})\neq\varnothing$.
    Continuing like this, we ultimately see that \Cref{eq:fenceminx} holds true.

Since the $\sim_k$-factor of $S$ is finite, we can select finitely many elements $a^1, a^2, \dots, a^{\ell} \in \supp(\bin)$ such that each $\sim_k$-class appearing in $\supp(S)$ is represented by some $a^i$. Moreover, we may choose these elements so that for every $i < \ell$, there exists an $m$-fence in $R$ connecting $a^i$ to $a^{i+1}$.
    By concatenating these fences, we obtain an $m$-fence whose set of lower tips contains all the elements $a^1, a^2, \dots, a^{\ell}$.
    Repeating the argument in the previous paragraph, we see that $\bigcap_{i\leq \ell}\minx(a^i)\neq\varnothing$. 
    By the choice of the $a^i$, this intersection is equal to $\bigcap_{a\in \supp(S)}\minx(a)$.
\end{proof}

The existence of min-clean tuples is an almost immediate consequence of \Cref{lemma:mi:minx_int2}.
\begin{lemma}\label{lemma:mi:minclean}
     Let $\bin$ be a binary relation on $\Q^k$ that is preserved by $\gen{\mi}$.  
     If $\bin$ is smooth and has pseudo-algebraic length $1$, 
     then $\bin$ contains a min-clean tuple.
\end{lemma}
\begin{proof}
Any tuple $t\in \bin$ with $|M(t)|=1$ is min-clean. Suppose that such a tuple does not exist in $\bin$. Since $E$ is smooth and has pseudo-algebraic length $1$, we can pick $S\subseteq E$ that is smooth, weakly connected, and has pseudo-algebraic length $1$. 
    It follows that all tuples in $\supp(S)$ share the same minimal entry~$q$.
    Pick $t^1, \dots, t^m$ such that every $\sim_{2k}$-class of a tuple from $S $ is represented by one of $t^1, \dots, t^m$. By smoothness of $S$, every $\sim_k$-class appearing among $t^1_1, \dots, t^m_1$ coincides with some $\sim_k$-class appearing among $t^1_2, \dots, t^m_2$, and vice versa.
    In the definition of $\mi$, choose  $\alpha,\beta$ and $\gamma$ such that $\alpha(q)=q$.
    %
    Applying \Cref{lemma:mi:minx_int2} to $S$, we see that  
    $\bigcap_{i\in\br m}\minx(t^i_j)=\bigcap_{a\in \supp(S)}\minx(x)\neq \varnothing$ for $j=1,2.$ 
    The tuple $t\coloneqq\nested{\mi}{m}(t^1,\dots,t^m)$ 
   now  satisfies $\min(t_1)=\min(t_2)=q$, and $\minx(t_1)=\bigcap_{a\in \supp(S)}\minx(a)=\minx(t_2)$.
\end{proof}

The following lemma, in the setting of $2$-transitive higher-arity relations $R$, either directly yields a min-clean tuple, or implies that~$R$ consists entirely of min-ready tuples. 

\begin{restatable}{lemma}{minreadymi}\label{l:minreadymi}
    Let $n \geq 3$, and let $R$ be an $n$-ary $2$-transitive relation on $\Q^k$ that is preserved by $\gen\mi$. Either there is $t \in R$ with $|M(t)|=1,$ or $|M(t)|=n$ for all $t \in R$.
\end{restatable}
\begin{proof}
    Suppose that $t\in R$ satisfies $|M(t)|<n,$ and without loss of generality assume that $1 \in M(t)$. 
    By $2$-transitivity of $R$, there are tuples $t^2,\dots,t^n\in R$ such that $t^i_1=t_1$, $1\in M(t^i)$, and $i\notin M(t^i)$ 
    for all $2\leq i \leq n$.
    In \Cref{def:mi_mx}, choose $\alpha$ in the definition of $\mi$ such that $\alpha(\min(t))=\min(t).$ 
    It then follows that the tuple $s\coloneqq\nested{\mi}{n-1}(t^2,\dots, t^n)$  satisfies $\min(s)=\min(t)$, and $M(s)=\{1\}$.
\end{proof}

Finally, we derive min-clean tuples for $2$-transitive relations. We point out that the reasoning in the first paragraph of the proof crucially relies on  the assumption $n \geq 3$. 

\begin{lemma}\label{l:hypmi:minclean}
Let $n\geq 3$, and let $R$ be an $n$-ary  relation  on $\Q^k$ that is 
 preserved by $\gen\mi$. If $R$ is cyclic and $2$-transitive, then $R$ contains a min-clean tuple. 
  
\end{lemma}
\begin{proof}
    Suppose $\min_{t \in R}|M(t)| > 1 $, and take $t\in R$ arbitrarily. By \Cref{l:minreadymi}, $t$ is in particular min-ready, and it satisfies  $\min(t_1)=\dots=\min(t_n)=:q$. We claim that $\bigcap_{i \leq n}\minx(t_i)\neq \varnothing.$ 
    Indeed,
    define $t^1\coloneqq t$, and take $t^2, \dots, t^{n-1}\in R$ to be permutations of~$t$ satisfying $t^i_1=t_1$ and $t^i_2=\jo{t_{i+1}}$ for all $2\leq i \leq n-1$. This is possible by $2$-transitivity of $R$. Moreover, we may choose $t^n\in R$ that is a permutation of $t$ and satisfies $t^n_2=t_1$ and $t^n_3=t_3$.   
    Since $\bigcap_{i \leq n-1}\minx(t_1^i)=\minx(t_1)$,  \Cref{l:mihypint} applied to the tuples $t^1, \dots, t^{n-1}$ 
    gives $\bigcap_{i\leq n-1}\minx(t^i_j)\neq \varnothing $ for all $j \geq 2$.  As, by definition, it holds that $\minx(t^n_3)=\minx(t^1_3)$ and hence $\bigcap_{i\leq n}\minx(t^i_3)\neq \varnothing$, we may apply \Cref{l:mihypint} again to the tuples $t^1,\dots, t^n$. We obtain $\bigcap_{i\leq n}\minx(t^i_j)\neq \varnothing $ for all $j \leq n$.  By the choice of $t^i$,  this intersection is for $j=2$ equal to $\bigcap_{i \geq 2} \minx(t_i)$.

   Let now $s^1, \dots, s^{n}$ denote all cyclic permutations of the tuple $t$. Since $R$ is cyclic, we have $s^i \in R$ for every $i \leq n$. Choose $\alpha$ as in \Cref{def:mi_mx} such that $\alpha(q)=q,$ and set  $s\coloneqq\nested{\mi}{n}(s^1, \dots, s^n).$ Observe that $\min(s)=q$, and $\minx(s_i)=\bigcap_{j\leq n}\minx(t_{j})$ for all $i\leq n$. In particular, $s$ is min-clean. 
\end{proof}
 
\subsection{Step \texorpdfstring{$2$}{2}: Recursive Construction}\label{sect:chasingorbits}
\ignore{
For the proofs of \Cref{theorem:pseudoloop,theorem:mainthmhyp}, a pseudo-loop is constructed via a recursive procedure that, at each step, aligns the minimal elements of appropriately chosen tuples.
We only consider relations on some power of $\Q$ that are preserved by one of the operations $\min, \mi, \mx,$ or $\ell \ell$. The general case for relations pp-interpretable in temporal constraint languages easily reduces to this setting, as discussed in the proof of~\Cref{theorem:pseudoloop}. In what follows, we fix an integer $k \geq 2$. 

\begin{definition}
    \begin{itemize}
        \item For every $a \in \Q^k$, we denote by $\min(a)$  the minimal entry 
    of $a$. By $\minx(a)$ we denote the set of all indices $i \leq k$ for which $a_i=\min(a) $.
    \item For all $a,b \in \Q^k$ and $I\subseteq \br k$, we write $a\sim_I b$ if  $\pr_I(a)\sim_{|I|} \pr_I(b)$.
        \item For $1\leq m\leq k$ and $a\in \Q^k$, we set $I_m(a)$ for the set of those indices $i$ such that $a_i$ is among the $m$ smallest values that appear in $a$.
        \item \jo{For $a\in \Q^k$ and $b \in \Q^{k'} $ we write $a < b$ if $a_i < b_j$ for all $i \in \br k$ and $j \in \br{k'}$.}
   \end{itemize}
   \end{definition}

 Given an $n$-ary relation $R$ on $\Q^k$,  we iteratively find tuples  $a^m \in R$ so that, for each $m\leq k$ and for all $i,j \in \br n$, \begin{enumerate}   
 \item $I_m(a_i^m)=I_m(a_j^m)$,~i.e.~the $m$ smallest values of  $a_i^m$ and $a_j^m$ are taken on the same set of coordinates;
 \item $a^m_i\sim_{I_m(a^m_i)} a^m_j$,~i.e.~the subtuples  of $a^m_i$ and $a^m_j$ containing precisely the $m$ smallest values of  $a^m_i$ and $a^m_j$, respectively,   are contained in the same $\sim_m$-class; 
 \item $\pr_{I_m(a^{m+1}_i )} (a^{m+1}_i ) \sim_{m} \pr_{I_m(a^{m}_i )}( a^{m}_i)$,~i.e.~the $\sim_m$-class of the subtuple of $a^{m+1}_i$  containing the $m$ smallest entries of $a^{m+1}_i$  coincides with  the  $\sim_m$-class of the subtuple of $a^{m}_i$ containing all of its  $m$ smallest entries.
\end{enumerate}    After at most $k$-many steps, this recursion yields a tuple all of whose coordinates lie in the same orbit of $\Aut(\Q)\acts \Q^k$,~i.e.~a pseudo-loop in $R$. 

\begin{definition} Let $a$ be an $n$-tuple on $\Q^k$. By $M(a)$ 
we denote the set consisting of all $i \in \br n$ for which $\min(a_i)=\min(a)$.  We say that $a$ is \emph{min-clean} if for all $i,j \in M(a)$ we have $\minx(a_i)=\minx(a_j)$.
\end{definition}

The technical part of the proofs of \Cref{theorem:pseudoloop,theorem:mainthmhyp} concerns the existence of min-clean tuples, and is deferred to \Cref{section:minclean}.} In the following, we construct a pseudo-loop  under the assumption that min-clean tuples exist. We note an asymmetry between the cases distinguished by \Cref{theorem:tractable_pols}. When the template is preserved by $\ell\ell$, the existence of min-clean tuples only relies on the polymorphism $\lex$ (which is generated by $\ell\ell$), while the subsequent construction uses the full strength of $\ell\ell$. In the other cases, the construction proceeds using $pp$ (which alone does not prevent omni-expressivity), and the existence of min-clean tuples is shown by employing the polymorphisms $\min, \mi$, and $\mx$, respectively \jo{(see also \Cref{fig:structure})}.  
The construction is made up of a nested composition of $\ell \ell$ and $pp$. 
%

\subsubsection{Templates with \texorpdfstring{$pp$}{pp}}

\jo{The following lemma establishes the base step of the iterative process for constructing a pseudo-loop. }

\begin{lemma}\label{cl:pp:slice} Let $\bin$ be a binary smooth relation on $\Q^k$ that is preserved by $\gen{\pp{}}$. If   $\bin$ contains a min-clean tuple  $\binom{a^1}{b^1}$, then there exist $I\subseteq k$, $m,n\geq1,  q\in \Q$, and $\binom{a^2}{b^2},\ldots,\binom{a^n}{b^n}\in \bin$ such that for all $a, b \in \Q^k$ the tuple  defined by  
  \begin{equation*}
      \binom{a'}{b'} \coloneqq\nested{\pp{q}}{n+1}\left(\binom{a^1}{b^1}, \binom{a^2}{b^2}, \dots, \binom{a^n}{b^n},\binom{a}{b}\right)
  \end{equation*}
  satisfies:
  \begin{itemize}
      \item $I_m(a')=I_m(b')=I$ and $a'\sim_I b'$;
      \item $a'\sim_{\br k\setminus I} a$ and $b'\sim_{\br k\setminus I} b$.
  \end{itemize}
\end{lemma}

\begin{proof}
    If $\min a^1=\min b^1$ then $\minx a^1=\minx b^1$. In this case, we set $I\coloneqq\minx a^1$, $m\coloneqq1$, $n\coloneqq1$, and $q\coloneqq\min a^1$.

    Otherwise, assume without loss of generality that $\min a^1<\min b^1$. We set $q\coloneqq\min a^1$, and inductively set $b^{i+1}\coloneqq a^i$, and pick $a^{i+1}$ arbitrarily so that $\binom{a^{i+1}}{b^{i+1}}\in T$. Set $J^1\coloneqq\minx a^1$, and inductively $J^{i+1}\coloneqq\{j\;|\; a^{i+1}_j\leq q\}\setminus \jo{(}J^1\cup\cdots\cup J^{i}\jo{)}$. Finally, we pick $n \geq 2$ minimal such that $J^n$ is empty, set $I$ to be the union $J^1\cup\cdots\cup J^n$, and $m\coloneqq|I|$.

    For all $i$ we have $a'\sim_{J_i} a^i$ and $b'\sim_{J_i} b^{i+1}=a^i$, hence $a'\sim_{J_i}b'$. Moreover, \jo{every element appearing in the tuple} $\pr_{J_i}(a')$ \jo{is strictly smaller than every element appearing in the tuple}  $\pr_{J_{i+1}}(a')$. 
    Similarly, \jo{all entries of} $\pr_{J_i}(b')$ \jo{are strictly smaller than all entries of} $\pr_{J_{i+1}}(b')$ for all $i$. It follows that  $a'\sim_I b'$. Moreover, \jo{the entries of} $\pr_I(a')$ \jo{are smaller than the entries of} $\pr_{\br k\setminus I}(a')$, and \jo{the entries of} $\pr_I(b')$ \jo{are smaller than the entries of} $\pr_{\br k\setminus I}(b')$, hence $I_m(a')=I_m(b')=I$. Finally, it is easy to verify that $a'\sim_{\br k\setminus I} a$ and $b'\sim_{\br k\setminus I} b$.
\end{proof}

A similar statement holds true for higher-arity relations $R$ that are cyclic: 

\begin{restatable}{lemma}{pphypslice}\label{l:hypslice} Let $n \geq 2$, and let $R$ be an $n$-ary cyclic relation on $\Q^k$ that is preserved by $\gen{\pp{}}$. If $R$ contains a min-clean tuple $\tup ^1$, then
there exist $I\subseteq \br k,  q\in \Q$, and $\tup^2, \dots, \tup^n\in R$ such that for all $\tup \in R$ the tuple $\tup'$  defined by 
  \begin{equation*}
       \tup'\coloneqq \nested{pp_q}{n+1} \left(\tup^1, \tup^2, \dots, \tup^n, \tup\right)
  \end{equation*} 
  satisfies:
  \begin{itemize}
      \item $I_{1}(\tup'_i)=I$ for all $i \leq n$;
      \item $\tup'_i\sim_{\br k\setminus I} \tup_i$ for all $i \leq  n$.
  \end{itemize}
\end{restatable}
\begin{proof}
Without loss of generality assume that $1 \in M(\tup^1)$, and set $q\coloneqq\min(\tup^1_1)$. By min-cleanliness of $\tup^1$, its cyclic permutations  $\tup^2\coloneqq(\tup^1_n, \tup^1_1, \tup^1_2, \dots), \dots, \tup^{n}\coloneqq(\tup^1_2, \dots, \tup^1_1)$ and $I\coloneqq\minx(\tup^1_1)$ now have the required properties. 
\end{proof}

\jo{ Both in the binary case and in the higher-arity case, we get a pseudo-loop  by iteratively applying \Cref{cl:pp:slice,l:hypslice}, respectively, to relations defined on strictly decreasing powers $k$ of $\Q$. } To ensure the existence of min-clean tuples in cyclic relations $R$ of arity $n \geq 3$, as shown in \Cref{section:minclean}, we need the additional requirement of 2-transitivity.

\begin{lemma}\label{lemma:pp:inductionstep} Let $R$ be an $n$-ary relation on $\Q^k$ that is preserved by either $\gen\min$, $\gen\mi$, or $\gen\mx$. If $n=2$ and $R$ is  smooth and of pseudo-algebraic length $1$, or $n\geq 3$ and $R$ is cyclic and $2$-transitive, then $R$ contains a pseudo-loop. 
\end{lemma}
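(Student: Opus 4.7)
The plan is to induct on $k$. The base case $k=1$ is immediate, as $\Aut(\Q)$ acts transitively on $\Q$, so every tuple in $R$ is automatically a pseudo-loop. For the inductive step ($k\geq 2$) we invoke the (forthcoming) existence of a min-clean tuple in $R$, which uses the hypotheses of the lemma together with preservation by $\min$, $\mi$, or $\mx$. Since each of these operations generates $pp$, the relation $R$ is preserved by $\langle pp\rangle$, so we may apply \Cref{cl:pp:slice} (if $n=2$) or \Cref{l:hypslice} (if $n\geq 3$) to obtain a non-empty subset $I\subseteq\br k$, a parameter $q\in\Q$, and auxiliary tuples which together yield a polymorphism-generated ``slice'' map $\sigma$. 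Applied to any $(a_1,\ldots,a_n)\in R$, this map outputs $(a'_1,\ldots,a'_n)\in R$ whose subtuples $a'_i|_I$ all lie in a common $\Aut(\Q)$-orbit (in the hypergraph case this is because they are constant tuples), while $a'_i|_{\br k\setminus I}\sim_{k-|I|} a_i|_{\br k\setminus I}$, and, crucially, all entries at positions in $I$ are strictly below all entries at positions in $\br k\setminus I$.

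If $|I|=k$ then any slice output is itself a pseudo-loop and we are done. Otherwise $|I|<k$, and we pass to the componentwise projection
\[
\tilde R \coloneqq \{(\pr_{\br k\setminus I}(a_1),\ldots,\pr_{\br k\setminus I}(a_n)) : (a_1,\ldots,a_n)\in R\}\subseteq(\Q^{k-|I|})^n.
\]
Because componentwise projection commutes with $\min$, $\mi$, and $\mx$, the projection $\tilde R$ is preserved by the same operation as $R$. Smoothness of $\tilde R$ in the binary case follows from smoothness of $R$; likewise cyclicity and $2$-transitivity in the hypergraph case descend directly. For the pseudo-algebraic-length-$1$ requirement, the key observation is that componentwise projection induces a well-defined map between the orbit-factors $\Q^k/{\sim_k}\to\Q^{k-|I|}/{\sim_{k-|I|}}$, which carries any closed walk of algebraic length~$1$ in the $\sim_k$-factor of $R$ to one of the same algebraic length in the $\sim_{k-|I|}$-factor of $\tilde R$.

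By the induction hypothesis $\tilde R$ contains a pseudo-loop $(\tilde a_1,\ldots,\tilde a_n)$. Choose a lift $(a_1,\ldots,a_n)\in R$ with $\pr_{\br k\setminus I}(a_i)=\tilde a_i$ for each $i$, and feed it into the slice $\sigma$ to obtain $(a'_1,\ldots,a'_n)\in R$. The conclusion of \Cref{cl:pp:slice}/\Cref{l:hypslice} gives $a'_i|_I\sim_{|I|} a'_j|_I$ and $a'_i|_{\br k\setminus I}\sim_{k-|I|}\tilde a_i$ for all $i,j$; combined with $\tilde a_i\sim_{k-|I|}\tilde a_j$ this yields $a'_i|_{\br k\setminus I}\sim_{k-|I|} a'_j|_{\br k\setminus I}$. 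The strict separation of values -- entries at $I$ lying below those at $\br k\setminus I$, uniformly for every $i$ -- then glues the two partial alignments into a single one: $a'_i\sim_k a'_j$ for all $i,j$, so $(a'_1,\ldots,a'_n)$ is a pseudo-loop in $R$.

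The principal obstacle is the existence of min-clean tuples, which is the technical heart of the paper and is deferred to \Cref{section:minclean}; here it is treated as a black box. Beyond that, the only genuine bookkeeping issue is the preservation of pseudo-algebraic length~$1$ under projection in the binary case, which is secured by the orbit-compatibility of componentwise projection described above.
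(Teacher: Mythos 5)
Your proof is correct and follows essentially the same route as the paper: induction on $k$, invoke min-clean tuples from \Cref{section:minclean}, apply the slicing lemmata \Cref{cl:pp:slice} or \Cref{l:hypslice}, pass to the projection onto $\br k\setminus I$, apply the induction hypothesis there, and glue the two alignments using the strict separation of values between $I$ and its complement. You are somewhat more explicit than the paper's terse version -- in particular you spell out why the projected relation inherits smoothness, cyclicity, $2$-transitivity, and pseudo-algebraic length~$1$, which the paper leaves implicit -- but the underlying argument is identical.
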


\begin{proof}
    We use induction on $k$. The claim is trivial for $k=1$. Assume we have already shown it for all $j<k$.
\jo{If $R$ is preserved by $\gen \mi$, then \Cref{lemma:mi:minclean,l:hypmi:minclean} yield the existence of min-clean tuples in $R$. In all other cases, proofs for their existence are provided in the appendix (\Cref{lemma:min:minclean,lemma:hypmin:minclean,lemma:mx:minclean,l:hypmx:minclean}).}
The fact that $pp$ is contained in all three clones in question allows us to employ \Cref{cl:pp:slice,l:hypslice}, respectively. We prove the statement for the binary case; the  case $n\geq 3$ follows by analogy.

    Let $\binom{a^1}{b^1} \in R$ be min-clean, and take $I,m,n,q,$ and $\binom{a^2}{b^2},\ldots,\binom{a^n}{b^n} \in R$ as in \Cref{cl:pp:slice}. By induction hypothesis applied to the relation $\pr_{\br k \setminus I }(R)$, there exists $\binom{a}{b}\in R$ such that $a\sim_{\br k\setminus I}b$. For $\binom{a'}{b'}$ defined as in \Cref{cl:pp:slice}, we then have:
    \begin{itemize}
      \item $I_m(a')=I_m(b')=I$ and $a'\sim_I b'$;
      \item $a'\;\sim_{\br k\setminus I}\; a\sim_{\br k\setminus I}\; b\;\sim_{\br k\setminus I}\; b'$.
  \end{itemize}
  Hence, $\binom{a'}{b'}$ is a pseudo-loop.
\end{proof}

\subsubsection{Templates with \texorpdfstring{$\ell\ell$}{ll}}
For the proof of \Cref{cl:pp:slice}, we
make heavy use of the fact that $pp_q(x,y)$   depends solely on the value of $x$ when $x\leq q$, and on the value of $y$ when 
$x>q$.
  In contrast, 
every operation contained in the clone $\gen{\ell\ell}$ is injective in all its non-dummy variables.  
In order to adapt 
to this 
setting, we need to `synchronise' the kernels of the tuples we work with.

\begin{definition}\label{def:derivative}
For an $n$-ary relation $R$ on $\Q^k$, set 
\[ R'\coloneqq\{t\in \genrel{R}{\lex}     \;|\; \ker t=\bigcap_{r \in R} \ker r \}. \]
\end{definition}

The relation $R'$
inherits desirable properties from $R$:

\begin{restatable}{lemma}{derivative}\label{lemma:derivative}
   Let $n \geq 2$, and let $R$ be an $n$-ary relation on $\Q^k$. Then $R'\neq \varnothing$. 
    If $n=2$ and $R$ is smooth, then so is $R'$. If $\Sigma$ is a subgroup of $\textnormal{Sym}(n)$ and  $R$ is $\Sigma$-invariant, then so is  $R'$.
\end{restatable}
\begin{proof}
    The first statement is obtained by applying $\nested{\lex}{m}$ of a sufficiently high arity $m$ to tuples in $R$ until the kernel of the resulting tuple can no more be refined. Suppose that $n=2$ and $R$ is smooth. To see that $R'$ is smooth, pick $t\in R'$,  
    $f \in \gen{\lex}$ of some arity $m \in \N$, and $s^1, \dots, s^m \in R$ 
    such that $t=f(s^1, \dots, s^m).$ We show that $t_1 \in \pr_2(R')$, the converse statement for $t_2$ is shown analogously.
    By smoothness of $R$, for every $i\leq m$ there exists $a^i \in \Q^k$ such that $\binom{a^i}{s^i_1}\in R$. Let $a\coloneqq f(a^1, \dots, a^m)$. Picking any $\binom{b}{c}\in R'$, we now have that $\lex(\binom{a}{t_1},\binom{b}{c})\in R'$. From $t_1\sim_k \lex(t_1,c)$, the statement follows. The last statement is clear from the definition of $R'$.
\end{proof}

For relations $R$ preserved by $\gen{\ell\ell}$, the existence of pseudo-loops is shown in a similar way as for relations preserved by $\gen{\pp{}}$.
Namely, it follows from adaptions of \Cref{cl:pp:slice,l:hypslice} to this new setting (\Cref{cl:ll:slice,l:ll:hypslice}) along with the existence of min-clean tuples (\Cref{lemma:lex:minclean,l:hyplexminclean}). 
The crucial difference to the previous setting is that all tuples we work with must be contained in $R'$. 
This additional prerequisite leads to \Cref{cl:ll:slice,l:ll:hypslice} being slightly weaker than their analogues \Cref{cl:pp:slice,l:hypslice}, compensated by strengthened results showing the existence of min-clean tuples even within~$R'$.

\begin{lemma}\label{cl:ll:slice}
    Let $\bin$ be a binary smooth relation on $\Q^k$ that is preserved by $\gen{\ell\ell}$. Let $S \subseteq \bin$ be smooth such that $S'$ contains a min-clean tuple $\binom{a^1}{b^1}$. Then there exist $I\subseteq k$, $m,n\geq1,  q\in \Q$, and $\binom{a^2}{b^2},\ldots,\binom{a^n}{b^n}\in S'$ such that for all $\binom{a}{b} \in S'$ the tuple  defined by 
  \begin{equation*}
       \binom{a'}{b'}\coloneqq \nested{\ell\ell_q}{n+1} \left(\binom{a^1}{b^1},\binom{a^2}{b^2}, \dots,\binom{a^n}{b^n},\binom{a}{b}\right)
  \end{equation*} 
  satisfies:
  \begin{itemize}
      \item $I_m(a')=I_m(b')=I$ and $a'\sim_I b'$;
      \item $a'\sim_{\br k\setminus I} a$ and $b'\sim_{\br k\setminus I} b$.
  \end{itemize}  
\end{lemma}
\begin{proof}
    We repeat the construction in the proof of \Cref{cl:pp:slice}, exchanging $\pp{q}$ for $\ell\ell_q$ and $E$ for $S'$.
    The latter is possible because \Cref{lemma:derivative} guarantees smoothness of $S'$, allowing us to choose suitable elements $b^{i+1}$.
    By definition of $S'$, the tuple $(a', b')$ then has the required properties. 
\end{proof}

\begin{lemma}\label{l:ll:hypslice}
   Let $n \geq 2$, and let $R$ be an $n$-ary cyclic relation on $\Q^k$ that is preserved by $\gen{\ell\ell}$. If $R'$ contains a min-clean tuple $t^1$,
then there exist $I\subseteq \br k,  q\in \Q$, and $t^2, \dots, t^n\in R$ such that for all $t \in R'$ the tuple $t'$  defined by 
  \begin{equation*}
       t'\coloneqq \nested{pp_q}{n+1} \left(t^1, t^2, \dots, t^n, t\right)
  \end{equation*} 
  satisfies:
  \begin{itemize}
      \item $I_{1}(t'_i)=I$ for all $i \leq n$;
      \item $t'_i\sim_{\br k\setminus I} t_i$ for all $i \leq  n$.
  \end{itemize} 
\end{lemma}
\begin{proof}
    By \Cref{lemma:derivative}, $R'$ is cyclic, thus containing all cyclic permutations of $t^1$. 
    The proof now is analogous to the one of \Cref{l:hypslice}. 
\end{proof}

For the proof of \Cref{theorem:pseudoloop}, we apply the following lemma to a smooth weakly connected relation $S$ of pseudo-algebraic length~$1$ that is contained in the edge relation of a digraph preserved by $\gen{\ell \ell}$. 
\begin{restatable}{lemma}{llinductionstep}\label{lemma:ll:inductionstep}
    Let  $S$ be a binary relation on $\Q^k$. If $S$ is smooth, weakly connected and of pseudo-algebraic length~$1$, then  $\genrel{S '}{\ell\ell}$ contains a pseudo-loop.
\end{restatable}
\begin{proof}
We repeat the induction in the proof of \Cref{lemma:pp:inductionstep}. This time, \Cref{lemma:lex:minclean} provides a min-clean tuple within $S'$, and we use \Cref{cl:ll:slice} instead of \Cref{cl:pp:slice} to finish the proof. 
\end{proof}

Finally, the corresponding lemma for higher arity relations provides the missing ingredient for \Cref{theorem:mainthmhyp}. 

\begin{restatable}{lemma}{llhypinductionsstep}\label{hyp:lex:inductionstep}
    Let $R$ be an $n$-ary relation on $\Q^k$ that is preserved by $\gen{\ell\ell}$. If $n \geq 3$ and $R$ is cyclic and $2$-transitive, 
    then  $R'$ contains a pseudo-loop.
    
\end{restatable}
\begin{proof}
    We proceed by induction on $k$, employing \Cref{l:hyplexminclean} for the existence of a min-clean tuple within $R'$ together with \Cref{l:ll:hypslice}.
\end{proof}

\subsection{Proof of the main theorems}\label{sect:masterproof}

With the classification of not omni-expressive temporal constraint languages in hand, the remaining task is to combine the results established thus far to prove \Cref{theorem:pseudoloop} for temporal constraint languages.

\begin{theorem}\label{temp:theorem:pseudoloop}
    Let $\sA$ be a  temporal  constraint language that is not omni-expressive. If $\sG$ is any smooth digraph that is pp-interpretable in $\sA$ and has pseudo-algebraic length~$1$ modulo $\Aut((\Q;<))$, then $\sG$ contains a pseudo-loop modulo $\Aut((\Q;<))$.
\end{theorem}

\begin{proof}
Replacing $\sG$ by the preimage $\bar \sG$ of $\sG$ under the map of the pp-interpretation in $\sA$, we may assume that $\sG$ is a digraph defined on a finite power of $\Q$ whose edge relation $E$ is preserved by every polymorphism of $\sA$. Indeed, by definition of a pp-interpretation, the preimage $\bar E$ of $E$  is pp-definable in $\sA$, and therefore preserved by  $\Pol(\sA)$~\cite{BodirskyNesetrilJLC}.  A pseudo-loop in $\bar E$  yields a pseudo-loop in $E$ modulo $\Aut(\Q)$. Moreover, $\bar E$ is smooth and has pseudo-algebraic length modulo $\Aut(\Q)$.

By \Cref{theorem:tractable_pols}, $\sA$ is preserved by  one of  $\ell\ell$, $\min$, $\mi$, $\mx$, their duals, or a constant operation. Accordingly,  $\sG$ is preserved by one of these operations,  and thus by all operations in its generated clone.  
Clearly, every non-empty  relation preserved by a constant operation  contains even a loop. 
    Moreover, $\sG$ contains a pseudo-loop if and only if $-\sG$ does. 
    Recalling that, for $f\in\{\ell\ell,\min,\mi,\mx\}$, $\sG$ is preserved by $\gen{\dual(f)}$ if and only if $-\sG$ is preserved by $\gen{f}$, this allows us to restrict ourselves to the cases where $\sG$ is preserved by one of $\gen{\ell\ell}$, $\gen{\min}$, $\gen{\mi}$, and  $\gen{\mx}$. 
    In the first case, let $S \subseteq E$ be smooth, weakly connected, and of pseudo-algebraic length~$1$ modulo $\Aut(\Q)$. 
    By \Cref{lemma:ll:inductionstep}, $\genrel{S'}{\ell\ell}\subseteq E$ contains a pseudo-loop modulo $\Aut(\Q)$.
    In all the other cases, we can directly apply \Cref{lemma:pp:inductionstep}, yielding a pseudo-loop modulo $\Aut(\Q)$ within $E$. 
\end{proof}

Similarly, putting together \Cref{hyp:lex:inductionstep,lemma:pp:inductionstep}, we obtain the corresponding pseudo-loop lemma for cyclic and $2$-transitive hypergraphs $\sT$.

\section{Phylogeny Constraint Languages}\label{sect:phylo}

A \emph{tree} $\sTree$ is a finite, undirected graph that is connected and acyclic. For our purposes, it suffices to consider \emph{binary branching rooted} trees,~i.e.~trees with a designated vertex $r$ (the \emph{root}) whose degree is $2$, and whose all other vertices either have degree $3$ or $1$. A vertex whose degree is $1$ is called a \emph{leaf}. We write $L(\sTree)$ for the set of all leaves of $\sTree$.

It is easy to see that in every tree, there is a unique shortest path between any two distinct vertices $x$ and $y$. We write $x < y$ if the unique shortest path from $x$ to $r$ contains $y$.  The \emph{youngest common ancestor}  of a set $X$ of vertices  is the lowest upper bound of $X$ with respect to the partial order induced by $<$.
Given three leaves $a,b$ and $c$, we write $a|bc$ to denote that $a$ is incomparable with the youngest common ancestor of the set $\{b,c\}$,~i.e.\;there exists a vertex splitting $a$ from $b$ and $c$: $a$ is a descendant of one of its children, while $b$ and $c$ are descendants of the other.
%
We refer to $(L(\sTree); |)$ as the \emph{leaf structure} of $\sTree$. It is a standard model theoretic fact that there exists a -- up to isomorphism unique -- countably infinite 
structure $\Phylo$ (namely,  the \emph{Fra\"{\i}ss\'e-limit} of the class of all leaf structures) with the following propert\jo{ies}: \jo{a finite structure homomorphically maps to $\Phylo$ if and only if it is the leaf structure of a binary branching rooted tree; moreover, $\Phylo$ falls within the scope of the Bodirsky–Pinsker conjecture and is therefore $\omega$-categorical and \emph{homogeneous}, meaning that every partial isomorphism between its finite induced substructures extends to an automorphism of $\Phylo$.}  A \emph{phylogeny constraint language} is a relational structure with domain $\sL$ all of whose relations are first-order definable in $\Phylo$. Most definitions introduced at the beginning of \Cref{section:temporal} carry over directly to the phylogenetic setting if one replaces $(\Q;<)$ by $\Phylo$; therefore, we do not repeat them. Now, $\sim_k$ denotes the orbit-equivalence of $\Aut(\Phylo)\acts \sL^k$. If clear from the context, we  omit the subscript $k$ for better readability.

By \cite{Phylo-Complexity}, there exists a homogeneous expansion $\Phyloconvex$ of $\Phylo$ such that $\prec$ is a linear order on $\sL$ that is \emph{convex}, i.e.~whenever $a\prec b\prec c$, then either $a|bc$ or $c|ab$ holds. 
We extend the ternary relation $|$  to non-empty subsets of $\sL$ by setting $U_0| U_1$ if  $(x_0|y_1z_1)$ and $(x_1|y_0z_0)$ hold true for all $x_i, y_i, z_i \in U_i$ for $i=0,1$. Note that $U_0|U_1$ if and only if $U_1|U_0$.
Every finite subset $U\subseteq\sL$ with at least two elements can then be written as the union of two sets $U_0, U_1$ such that $U_0|U_1$
and $U_0\prec U_1$. We call $(U_0,U_1)$ the \emph{generic partition of $U$}. \\

\textbf{Affine tree operations:} The following definitions are used to formulate the existence of binary polymorphism that are, in some precise sense, compatible with generic partitions. 
Let $A$ and $B$ be finite subsets of $\sL$. We say that a binary operation $f$ on $\sL$ is \emph{perfectly dominated by the first argument on $A\times B$} if for all tuples $a\in A^3$ and $b\in B^3$ the following two implications hold: \begin{itemize}
    \item if $a_1|a_2a_3$, then $f(a_1,b_1)|f(a_2,b_2)f(a_3,b_3)$, and
    \item if $b_1|b_2 b_3$, then $f(a_1,b_1)|f(a_1,b_2)f(a_1,b_3)$. 
\end{itemize} 
Perfect domination by the second argument is defined analogously. If $f$ is perfectly dominated by the first argument on $A\times B$ and by the second argument on $B\times A$,  we will at times say that $f$  is \emph{perfectly dominated by $A$ over $B$}.

Let now $f$ be injective. We say that $f$ is \emph{semi-dominated} on $A\times A$ for a finite set $A\subseteq\sL$ if it adheres to the following recursive definition: if $A$ has at most one element, then $f$ is always semi-dominated on $A\times A$; 
If $|A|\geq 2$, then $f$ is semi-dominated on $A\times A$ if there exists a partition $(A_0,A_1)$ of $A$ such that \begin{itemize}
    \item $A_0|A_1$,
    \item $f$ is semi-dominated on both $A_0\times A_0$ and $A_1\times A_1$,
    \item $f(A_0\times A_0)|f(A_1\times A_1)$,
    \item $f(A_0\times A_1)|f(A_1\times A_0)$,
    \item $f(A_0\times A_0\cup A_1\times A_1)|f(A_0\times A_1\cup A_1\times A_0)$, 
    \item $f$ is perfectly dominated by the first argument on $A_0\times A_1$, and
    \item $f$ is perfectly dominated by the second argument on $A_1\times A_0$.
\end{itemize} 
\begin{definition}
    A binary injective operation on $\sL$ is an \emph{affine tree operation} if it is semi-dominated on $A\times A$ for all finite sets $A\subseteq\sL$. 
\end{definition}

Our construction heavily relies on the following theorem ensuring the existence of such operations. The result is taken from \cite{Phylo-Complexity}, although we state it here in a different form, as explained in \Cref{sect:phylo:outline}.

\begin{theorem}[cf. \cite{Phylo-Complexity}]\label{phylo:theorem:classification}
   Let $\sA$ be a phylogeny constraint language. If $\sA$ is not omni-expressive, then either it has a constant endomorphism or it has a binary affine tree polymorphism $\tx$.
\end{theorem}
For every finite set $U$ and its generic partition $(U_0,U_1)$, we may therefore assume $\tx$ to be perfectly dominated by the arguments of $U_0$ over $U_1$ and to satisfy \begin{equation}\label{phylo:eq:affine_tree}
   \tx(U_0\times U_0)\prec \tx(U_1\times U_1)\prec\tx(U_0\times U_1)\prec\tx(U_1\times U_0). 
\end{equation}
Indeed, for every finite set $U$ of size at least $2$ and its generic partition $(U_0, U_1)$,  the clone generated by any affine tree operation contains an affine tree operation with this behaviour.  
Unless stated otherwise, we assume these domination properties also on all generic sub-partitions $(U_{w0},U_{w1})$ where $w\in\{0,1\}^*$ is a finite string. These sets are defined inductively, and we set the generic partition of a set $U$ containing at most one element to be $(U_0,\varnothing)$ with $U_0=U$.

For an $n$-ary relation $\hyp$ on $\sL^k$, $\hyp'$ is defined analogously to the temporal setting in \Cref{def:derivative}, simply replacing $\lex$ by $\tx$.
All observations from \Cref{lemma:derivative} transfer to this new setting easily. Additionally, $\hyp'$ is always invariant under $\gen{\tx}$, and if  $\hyp$ is invariant under this clone of operations, then $\hyp'\subseteq \hyp$.\\


%

\textbf{Split vectors:}
For every non-constant $a \in \sL^k$, there exists a unique vector $\s(a)\in\{0,1\}^k$ such that $\{a_i\colon s_i=0\}|\{a_j\colon s_j=1\}$ and $\{a_i\colon \s(a)_i=0\}\prec\{a_j\colon \s(a)_j=1\}$. If $a \in \sL^k$ is constant, we set $\s(a)=(0,\ldots,0)$. The vector $\s(a)$ is the \emph{split vector} of $a$. The sets  $\{a_i\colon \s(a)_i=0\}$ and $\{a_j\colon \s(a)_j=1\}$ correspond precisely to the generic partition of the set of all entries of $a$. Observe that if   $I\subseteq[k]$ is such that the tuple $\pr_I(\s(a))$ is non-constant, then $\s(\pr_I(a))= \pr_I(\s(a))$.

\begin{definition}\label{phylo:def:splits}
   Let $t$ be an $n$-tuple on $\sL^k$. We say that $t$ is  
    \begin{itemize}
        \item \emph{uniformly split} 
        if it holds for for all $i,j\in \br n$  that $\pr_i(\s(t))=\pr_j(\s(t))$,
        \item \emph{evenly split} if it holds for for all $i,j\in \br n$ that $\s(t_i)=\s(t_j)$,
        \item \emph{partially constant split} if there is $i\in \br n$ such that $\pr_i(\s(t))\in\{0^k,1^k\}$,
        \item \emph{half-constant split} if $t$ is partially constant split and $n=2$, and
        \item a \emph{pseudo-loop} if it holds for for all $i,j\in \br n$ that $t_i\sim_kt_j$.
    \end{itemize}
\end{definition}

If a tuple $t$ is evenly split, we also say that its entries \emph{share the same split}. If it is uniformly split, we also say that they \emph{share the same split with respect to a uniform partition}. This is supported by the fact that the generic partition $(U_0,U_1)$ of the entries of a uniformly split tuple $t$ contains the generic partitions $(U^i_{0},U^i_{1})$ of the entries of any subtuple $\pr_i(t)$ via $U^i_{0}\subseteq U_0$ and $U^i_{1}\subseteq U_1$.
 Every uniformly split tuple is evenly split.

 Two $k$-tuples $a$ and $b$ over $\sL$ are in the same orbit modulo $\Aut\Phylo$ if and only if $a_i|a_j a_\ell\leftrightarrow b_i|b_j b_\ell$  or all $i,j,\ell\in \br k$. Observe that, since $\prec$  is not invariant under automorphisms of $\Phylo$, this is the case if and only if the split vectors of all corresponding $2$- and $3$-subtuples of $a$ and $b$ are identical or dual in every coordinate. 
 In particular, a pseudo-loop need not be evenly split, and the split of a pseudo-loop may contain dual splits as well as constant $k$-subtuples.
 However, evenly split tuples are central to the inductive construction of pseudo-loops. Under fairly general assumptions, such tuples can always be guaranteed to exist. 

\begin{definition}\label{def:Iw}
    Given a tuple $a\in \sL^k$, let $I(a)=I_\varepsilon(a)\coloneqq [k]$. 
    Here, $\varepsilon$ denotes the empty string. 
    Moreover, given a finite string $w\in\{0,1\}^*$ we set
    \[
        I_{w0}(a)\coloneqq \{i\in I_w\colon \s(\pr_{I_w}(a))_i=0\}.
    \]
    The set $I_{w1}(a)$ is defined analogously, replacing $0$ by $1$. 
\end{definition}

Note that if $U$ is the set containing all entries of $a$, we get $I_w(a)=\{i\leq k\colon a_i\in U_w\}$.


\subsection{Proof Outline}\label{sect:phylo:outline}

First, we note that \Cref{phylo:theorem:classification} is based on the results of \cite{Phylo-Complexity}, but is stated in a slightly different form. The reason is as follows.
 In said paper, the authors show that any non-omni expressive phylogeny constraint language $\sA$ falls into one of three (not necessarily disjoint) categories: Either, $\sA$ has a constant endomorphism, $\End(\sA)=\End\Phylo$, or its model-complete core is isomorphic to a \mo{first-order} reduct of $(\sL,=)$. 
In the second case, it is further shown that the relation $|$ is invariant under $\Pol(\sA)$, and that $\Pol(\sA)$ contains an affine tree operation $\tx$.
However, it is known that every non-omni expressive \emph{equality constraint language}, i.e., first-order reduct of $(\sL,=)$, either has a constant endomorphism, or has all binary injections as polymorphisms \cite{ecsps}. 
Now, let $\sA$ be such that it falls into the third category, and such that it does not have a constant endomorphism. 
Denote the model-complete core of $\sA$ by $\sB$. 
We may assume that $\sB$ is a substructure of $\sA$ \cite[cf. Remark 4.7.5]{Book}.
It follows that, given any injective homomorphism $g\colon\sA\rightarrow\sB$, there is a binary injection $\phi\in\Pol(\sB)$ such that $(a,b)\mapsto\phi(g(a),g(b))\in\Pol(\sA)$ is in fact an affine tree operation. Since all affine tree operations generate the same clone, we conclude that $\tx\in\Pol(\sA)$.

It thus suffices to deal with relations defined on a power of $\sL$ that are, by \Cref{phylo:theorem:classification}, preserved by an affine tree operation $\tx$.  As for temporal constraint languages, pseudo-loops are constructed by a recursive procedure that terminates in finitely many steps. The binary operations $pp$ and $\ell \ell$
 are replaced by the binary affine tree operation 
$\tx$, whose domination properties guide the construction. Again, the recursion depends on the existence of tuples of a specific form: namely, on evenly split tuples. %
The coordinates $t_i\in \sL^k$ of an evenly split $n$-tuple $t$ agree precisely in which of their entries $\pr_j(t_i)$ are assigned to which ``side'' with respect to the  generic partition of the set of entries of $t_i$, e.g. the $j$-th entry of every coordinate may always be assigned to the ``left''. We then construct a finite sequence $(U_{w0}, U_{w1})_{w\in\{0,1\}^{\ast} }$ of sub-partitions and use the domination properties of $\tx$  to  recursively align both of these sides while preserving the  splits. 
After each step, the procedure either already produces a pseudo-loop, or returns a tuple that is more aligned than the previous one. \Cref{fig:phylo} illustrates the procedure for the case of digraphs.  

\begin{figure}
\centering
\caption{Structure tree of \Cref{theorem:pseudoloop} for phylogeny constraint languages}
\label{fig:phylo}

\begin{tikzpicture}[
  level distance=18mm,
  sibling distance=30mm,
  edge from parent/.style={
    draw,
    -Latex,
    line width=0.4pt,
    shorten >=3pt,
    shorten <=3pt
  },
  every node/.style={
    draw,
    rectangle,
    inner sep=2pt,
    align=center
  }
]

\node {$\boldsymbol{\tx}$}
  child { node {\shortstack{uniformly\\split}}
    edge from parent
    child { node {\shortstack{pseudo-\\loop}} edge from parent }
    child { node[draw=none, minimum size=0pt] {} edge from parent[draw=none] }
  }
  child { node {\shortstack{half-\\constant}}
    edge from parent
    child { node {\shortstack{pseudo-\\loop}} edge from parent }
    child { node {\shortstack{evenly split \&\\aligned to $1$-st}}
      edge from parent
      child { node {\shortstack{pseudo-\\loop}} edge from parent }
      child { node {\shortstack{evenly split \&\\aligned to $2$-nd}}
        edge from parent
        child { node {\shortstack{pseudo-\\loop}} edge from parent }
        child { node {$\cdots$} edge from parent child { node[draw=none, minimum size=0pt] {} edge from parent[draw=none] }
          child { node {\shortstack{evenly split \&\\aligned to ($k-1$)-st\\= pseudo-loop}} edge from parent }
        }
      }
    }
  };

\end{tikzpicture}

\end{figure}

\subsection{Digraphs}\label{sect:phylo:digraphs}

The core idea of our proof for the phylogeny case of \Cref{theorem:pseudoloop} is that any smooth digraph on $\sL^k$ of pseudo-algebraic length $1$ necessarily contains edges whose vertices are closely aligned in their splitting behaviour. We show that such a digraph must contain an edge that is either uniformly split or both half-constant and evenly split. 
Given such an edge, we can further align its vertices in an iterative procedure while preserving the desirable splitting behaviour, ultimately resulting in a pseudo-loop.

\begin{lemma}\label{phylo:lemma:graph:fence}
    Let $\bin$ be a binary relation on $\sL^k$ that is preserved by $\gen{\tx}$. 
    Moreover, let $\bin$ be smooth, and of \pal.
    Then there is either a uniformly split or a half-constant split edge in $\bin$.
\end{lemma}

\begin{proof}
    Suppose that there is $a\in\pr_1(\bin)$ having two outgoing edges of different splits, i.e.\;$\s(\binom{a}{b})\neq\s(\binom{a}{c})$. 
    It is easy to see by a case analysis that then, $\s(\tx(\binom{a}{b},\binom{a}{c}))$ is half-constant. 
    An analogous argument can also be made whenever some $a\in\pr_1(\bin)$ has two incoming edges of different splits.
    It remains to consider the case where for all $a\in\pr_1(\bin)$, all outgoing edges share the same, non-half-constant split, and the same is true for all incoming edges.
    By our assumptions, there is $S\subseteq\bin$ that is smooth, weakly connected and of \pal.
    This implies that for some $m\in\mathbb N$, $S$ is $m$-linked.
    Now, note that in any $m$-fence and any upper tip $a^{\ell,m}$ of this fence, both its incoming edges must share the same split.
    In particular, $a^{\ell,m-1}$ and $a^{\ell+1,m-1}$ must share the same split with respect to a uniform partition. 
    By a slight generalization of the argument at the beginning of this proof, also $a^{\ell,m-2}$ and $a^{\ell+1,m-2}$ must share the same split with respect to a uniform partition: Were this not the case, $\s(\tx(\binom{a^{\ell,m-1}}{a^{\ell,m-2}},\binom{a^{\ell+1,m-1}}{a^{\ell,m-2}}))$ would be constant in the top half.
    
    This argument can be repeated inductively to show that the two lower tips $a^{\ell,0}$ and $a^{\ell+1,0}$ share the same split with respect to a uniform partition. 
    Now, comparing their outgoing edges within the fence, and repeating the argument, now for outgoing instead of incoming edges, and continuing this argument, flipping whenever we arrive at the tips, we ultimately arrive at the conclusion that all lower tips of the fence must share the same split with respect to a uniform partition.
    Due to $m$-linkedness and $\omega$-categoricity, there is an $m$-fence in $S$ that contains orbit-representants of all vertices in $\pr_1(S)$ as lower tips. 
    This implies that all tuples in $\pr_1(S)$ share the same split. 
    Since we assumed that there are no half-constant split edges, $S$ contains only uniformly split edges. 
\end{proof}

\mo{\Cref{phylo:lemma:graph:fence} shows that we always have one of two kinds of edges to work with, both of which have a specific split. 
The main idea of the proof is now that the properties of $\tx$ as an affine tree operation, along with the homogeneity of $\Phylo$, allow us to modify those parts of a tuple $t$ assinged $0$ in $\s(t)$ separately from those assigned $1$. 
While uniformly split edges allow for an easy inductive argument and can be dealt with in the proof of \Cref{phylo:lemma:graph:even} directly, half-constant split tuples require more work, and are dealt with in the subsequent results.} 

\mo{
Suppose that there are half-constant split tuples. The simplest configuration is that in which there is an edge whose split assigns $0$ to one vertex and $1$ to the other; this allows for the direct construction of a pseudo-loop. 
While this lemma will be subsumed by later, more general results in this section, we chose to include it, as its proof uses the main ideas of many following constructions in their simplest form.
}

\begin{lemma}\label{phylo:graph:lemma:easy_case}
    Let $\bin$ be a binary relation on $\sL^k$ that is preserved by $\gen{\tx}$. 
    Moreover, let $\bin$ be smooth, and of \pal.
    If $\bin'$ contains an edge $\binom{a}{b}$ such that both $\pr_1(\s\binom{a}{b})$ and $\pr_2(\s\binom{a}{b})$ are constant, $\bin'$ contains a pseudoloop.
\end{lemma}

\begin{proof}
    Let $U$ be the set of entries of $\binom{a}{b}$.
    Without loss of generality, all entries of $a$ are within $U_0$, while those of $b$ lie in $U_1$. 
    By smoothness of $\bin'$, there is $c\in \sL^k$ such that $\binom{c}{a}\in\bin'$. Denote the set of its entries by $V$. By homogeneity of $\Phylo$, we can now choose $\alpha\in\Aut\Phylo$ such that $U_0\prec \alpha V\prec U_1$, $U_0|\alpha V$ and $\alpha V|U_1$.
    The perfect domination property of $\tx$ now implies that $\tx$ is dominated by $U_0$ over $\alpha V$ over $U_1$. 
    Consequently, $\tx(a,\alpha c)\sim a$, and $\tx(b,\alpha a)\sim a$, making $\tx(\binom{a}{b},\alpha\binom{c}{a})\in\bin'$ a pseudo-loop.
\end{proof}

\jo{From any half-constant edge, we get a half constant edge that is, additionally, evenly split. Moreover, we may ensure that the respective ``left'' halves of its vertices are isomorphic.}
Recall the definition of $I_w$, $w\in\{0,1\}^*$ in \Cref{def:Iw}.

\begin{lemma}\label{phylo:lemma:graph:even}
    Let $\bin$ be a binary relation on $\sL^k$ that is preserved by $\gen{\tx}$. Moreover, let $\bin$ be smooth, and let $\binom{a}{b}\in\bin$ such that $\s(\binom{a}{b})$ is half-constant.
    Then there is $\binom{e}{f}\in \bin$ such that $\s(e)=\s(f)$ and $\pr_{I_0(e)}(e)\sim \pr_{I_0(e)}(f)$.
\end{lemma}
\begin{proof}
    \mo{Without loss of generality, $\s\binom{a}{b}=\binom{\s(a)}{1}$. 
    By smoothness of $E$, there is $c\in \sL^k$ such that $\binom{c}{a}\in\bin$. 
    Denote the sets of entries of $\binom{a}{b}$ and $\binom{c}{a}$ by $U$ and $V$, respectively.
    By homogeneity of $\Phylo$, we can choose $\alpha\in\Aut(\Phylo))$ such that 
    $\s(\alpha a)=\s(a)$,
    $U_0\prec \alpha V\prec U_1$, $U_0|\alpha V$, and $U_0\cup \alpha V| U_1$.}
    \mo{By $\tx$ being an affine tree operation, we get that 
    \[s\left(\tx\left(\binom{a}{b},\binom{\alpha c}{\alpha a}\right)\right)=\binom{\s(a)}{1},\, \s(\tx(b,\alpha a))=\s(a),\]
    \text{ and } 
    \[\pr_{I_0(a)}(\tx(a,\alpha c))\sim\pr_{I_0(a)}(a)\sim\pr_{I_0(a)}(\tx(b,\alpha a)).\]
    Therefore, setting $e\coloneqq\tx(a,\alpha c)$ and $f\coloneqq\tx(b,\alpha a)$ proves the claim. 
    }
    


\end{proof}

\jo{Note that $I_{1^0}(a)=[k]=I_{1^0}(b)$, and that $\s(\pr_{I_{1^\ell}(a)}(a))=\s(\pr_{I_{1^\ell}(a)}(b))$ implies in particular that $I_{1^{\ell}0}(a)=I_{1^{\ell}0}(b)$ as well as $I_{1^{\ell+1}}(a)=I_{1^{\ell+1}}(b)$. In order to inductively build a pseudo-loop, we need more control over the split-behaviour of subtuples. }

%
\begin{definition}
    Given tuples $a,b\in \sL^k$ and $m\in\mathbb N$, we say that $a$ and $b$ are \emph{aligned to the $m$-th degree} if it holds for all $\ell <m$ that 
    \[
      \s(\pr_{I_{1^\ell}(a)}(a))=\s(\pr_{I_{1^\ell}(a)}(b)),
    \]  
    and 
    \[
    \pr_{\br{k}\setminus I_{1^{m}}(a)}(a)\sim \pr_{\br{k}\setminus I_{1^{m}}(a)}(b).
    \] 
\end{definition}
\jo{Clearly, the entries of the edge yielded by \Cref{phylo:lemma:graph:even} are aligned to the $1$-st degree. }
If two tuples $a,b\in \sL^k$ are aligned to the $m$-th degree and $|I_{1^m}(a)|\leq 1$, $a$ and $b$ are in the same orbit. Particularly, as $|I_{1^m}(a)|\leq k-m$, this is true for any two tuples aligned to the $(k-1)$-th degree.
\mo{
The following lemma can be seen as a generalisation of \Cref{phylo:graph:lemma:easy_case}. 
If large parts of the vertices in an edge already ``look the same'' with respect to $|$ and $\prec$ (i.e., the tuples are aligned to the $m$-th degree), and the remaining parts are separated from each other, we can manipulate these parts independently, and thereby synchronise the tuples. 
}

\begin{lemma}\label{phylo:graph:lemma:right_side_split}
    Let $\bin$ be a binary relation on $\sL^k$ that is preserved by $\gen{\tx}$. 
    Moreover, let $\bin$ be smooth. If there is $\binom{a}{b}\in \bin'$ such that $a$ and $b$ are aligned to the $m$-th degree for some $m\geq 1$, and $\pr_{I_{1^{m}}(a)}(a)|\pr_{I_{1^{m}}(a)}(b)$, then $\bin'$ contains a pseudo-loop.
\end{lemma}
\begin{proof}
    \mo{
    Without loss of generality, $\pr_{I_{1^{m}}(a)}(a)\prec\pr_{I_{1^{m}}(a)}(b)$. Denote the set of entries of $a$ by $U(a)$, and those of $b$ by $U(b)$.
    Then $U(a)\cup U(b)\setminus U_{1^m}(b)\prec U_{1^m}(b).$}
    By smoothness of $\bin'$, there is $c\in \sL^k$ such that $\binom{c}{a}\in\bin'$. Denote the set of entries of $\binom{c}{a}$ by $V$.
    Choose an automorphism $\alpha\in\Aut\Phylo$ such that
    \begin{itemize}
        \item $U(a)\cup (U(b)\setminus U_{1^m}(b))\prec \alpha V\prec U_{1^m}(b)$, 
        \item $U_{1^{m-1}0}(b)| \alpha V$,
        \item $U_{1^{m-1}0}(b)\cup \alpha V|U_{1^{m}}(b)$,
    \end{itemize}
    and such that $\tx$ is dominated by $U_{1^{m-1}0}(b)$ over $\alpha V$ over $U_{1^{m}}(b)$.
    \mo{Contrary to our usual assumption (\ref{phylo:eq:affine_tree}), we choose $\tx$ such that $\tx(\alpha V\times U_0(b))\prec\ldots\prec \tx(\alpha V\times U_{1^{m}}(b)).$}
    Now $\binom{e}{f}\coloneqq\tx(\binom{a}{b},\alpha\binom{c}{a})$ satisfies $e\sim a$, and 
    \begin{itemize}
        \item $I_{1^m}(f)=  I_{1^m}(a)$,
        \item \mo{$I_{1^{m-1}}(f)=  I_{1^{m-1}}(a),$}
        \item $\pr_{\br{k}\setminus I_{1^m}(f)}(f)\sim \pr_{\br{k}\setminus I_{1^m}(f)}(b)\sim \pr_{ I_{1^m}^c(f)}(a)$,  as well as
        \item $\pr_{ I_{1^m}(f)}(f)\sim \pr_{ I_{1^m}(a)}(a)$,
    \end{itemize}
    so in total $e\sim a\sim f$.
\end{proof}

\mo{
While the previous lemma lifted our procedure in the simplest case, \Cref{phylo:graph:lemma:easy_case}, to tuples that are aligned to some degree, the following proposition can be seen as an extension of \Cref{phylo:lemma:graph:even} to this new setting.
}

\begin{proposition}\label{phylo:prop:graph:pL}
    Let $\bin$ be a binary relation on $\sL^k$ that is preserved by $\gen{\tx}$. If $\bin$ is smooth and contains a half-constant split edge, then $\bin'$ contains a pseudo-loop.
\end{proposition}
\begin{proof}
    The proof is by induction on $k$. 
    For $k=1$, every edge is a pseudo-loop. Now, suppose that the statement is true for all $\ell<k$. 
    By \Cref{phylo:lemma:graph:even}, $\bin$ contains a half-constant split edge $\binom{a}{b}$ that is aligned to the $1$-st degree. Clearly, this edge can be chosen within $\bin'$. 
    Denote the sets of entries of $a$ and $b$ by $U(a)$ and $U(b)$.
    We run the following procedure. 
    \begin{enumerate}
        \item\label{phylo:prop:graph:pL:case1} If $\pr_{I_{1}(a)}(a)|\pr_{I_{1}(a)}(b)$, \Cref{phylo:graph:lemma:right_side_split} yields a pseudo-loop.
        \item\label{phylo:prop:graph:pL:case2} If $\pr_{I_{1}(a)}(a)\pr_{I_{1}(a)}(b)|\pr_{I_{0}(a)}(b)$, we can, by the induction hypothesis, choose a pseudo-loop $\binom{\tilde{c}}{\tilde{d}}$ in  $\pr_{I_{1}(a)}(E)$, as well as $\binom{c}{d}\in\pr^{-1}_{I_{1}(a)}(\binom{\tilde{c}}{\tilde{d}}).$ Denote the set of entries of $c$ and $d$ by $U(c)$ and $U(d)$.
        We can choose an automorphism $\alpha\in\Aut\Phylo$ such that 
        \begin{itemize}
            \item $U_0(a)\cup U_0(b)\prec \alpha(U(c)\cup U(d))\prec U_1(a)\cup U_1(b)$,
            \item $U_0(a)\cup U_0(b)|\alpha(U(c)\cup U(d))\cup U_1(a)\cup U_1(b)$, and
            \item $\alpha(U(c)\cup U(d))|U_1(a)\cup U_1(b)$.
        \end{itemize}

        Consider $\binom{e}{f}\coloneqq\tx(\binom{a}{b},\binom{c}{d})\in\bin'$, and note that 
        \begin{itemize}
            \item $\s(e)=\s(f)=\s(a)$,
            \item $\pr_{I_{0}(a)}(e)\sim \pr_{I_{0}(a)}(a)\sim \pr_{I_{0}(a)}(b)\sim \pr_{I_{0}(a)}(f)$, and 
            \item $\pr_{I_{1}(a)}(e)\sim \tilde{c}\sim \tilde{d}\sim \pr_{I_{1}(a)}(f)$. 
        \end{itemize}
        In other words, $\binom{e}{f}$ is a pseudo-loop.
        
        \item\label{phylo:prop:graph:pL:case3} Otherwise, we know that some of the entries of  $\pr_{I_{1}(a)}(a)$ are closer to $U_0(b)$, and some are closer to $U_1(b)$, i.e., $U_{10}(a)\cup U_0(b)|U_{1^2}(a)\cup U_1(b)$.
        By smoothness of $\bin'$, there is $c\in \sL^k$ such that $\binom{c}{a}\in\bin'$.
        The homogeneity of $\Phylo$ now allows us to choose an automorphism $\alpha\in\Aut(\Phylo))$ such that, denoting the set of entries of $\binom{c}{a}$ by $V$, 
        $U_{10}(a)\cup U_{0}(b)\prec \alpha V\prec U_{1^2}(a)\cup U_{1}(b)$,
        $U_{10}(a)\cup U_{0}(b)| \alpha V$, and 
        $\alpha V|U_{1^2}(a)\cup U_{1}(b)$.
        The affine tree operation $\tx$ can now be chosen to be perfectly dominated by $U_{10}(a)\cup U_{0}(b)$ over $ \alpha V$ over $U_{1^{2}}(a)\cup U_{1}(b)$.
        Similarly to the proof of \Cref{phylo:graph:lemma:right_side_split}, we require that, opposed to our usual assumption (\ref{phylo:eq:affine_tree}), $\tx$ satisfies
    \begin{equation}\label{phylo:prop:graph:pL:eq1}
        \tx(U_0(a)\times\alpha V)\prec \tx(U_{10}(a)\times\alpha V)\prec \tx(U_{1^2}(a)\times\alpha V),
    \end{equation}
    and the same for $b$ in place of $a$.
    As a result, we get
    \begin{align}
        \pr_{I_{1^{2}}^c(a)}(\tx(a,\alpha c))&\sim\pr_{I_{1^{2}}^c(a)}(a),\label{phylo:prop:graph:pL:eq2_1}
        \intertext{due to the perfect domination of $\tx$, and for the same reason also}
        \pr_{I_{1^{}}^c(a)}(\tx(b,\alpha a))&\sim\pr_{I_{1^{}}^c(a)}(b)\sim \pr_{I_{1^{}}^c(a)}(a),
        \intertext{and}
        \pr_{I_{1^{}}(a)}(\tx(b,\alpha a))&\sim\pr_{I_{1^{}}(a)}(a).
        \intertext{
        Further, by $\tx$ being an affine tree operation and 
        assumption (\ref{phylo:prop:graph:pL:eq1}), it follows that}
        I_{w}(a) =I_{w}(\tx(a,\alpha c))&= I_{w}(\tx(b,\alpha a))\label{phylo:prop:graph:pL:eq2_4}
        \intertext{for all words $w\in\{0,1,1^2\}$. 
        \Cref{phylo:prop:graph:pL:eq2_4} immediately yields that the first condition in the definition of being aligned to the $2$-nd degree is satisfied. For the second condition, we combine Equations (\ref{phylo:prop:graph:pL:eq2_1})-(\ref{phylo:prop:graph:pL:eq2_4}) and conclude that}
        \pr_{I_{1^{2}}^c(a)}(\tx(a,\alpha c))&\sim \pr_{I_{1^{2}}^c(a)}(\tx(b,\alpha a)).\notag
    \end{align}  
    We thus get that $\binom{e}{f}\coloneqq\tx(\binom{a}{b},\alpha \binom{c}{a})$ is half-constant split and aligned to the $2$-nd degree.
    Moreover, note that
    \begin{equation}\label{phylo:prop:graph:pL:eq3}
        U_{1^2}(e)U_1(f)|U_0(f).
    \end{equation}
    \end{enumerate}

    In any case, applying this procedure once either yields a pseudo-loop directly, or an edge that is aligned to a higher degree. 
    However, note that as we see in \Cref{phylo:graph:lemma:right_side_split}, Step \ref{phylo:prop:graph:pL:case1} works for general $m$, and not just $m=1$.
    Similarly, also Step \ref{phylo:prop:graph:pL:case2} works for general $m$ in almost the exact same way, if we replace all occurences of $0$ by $1^{m-1}0$ and $1$ by $1^{m}$. 
    Finally, a similarly adapted version of Step \ref{phylo:prop:graph:pL:case3} can be applied to all half-constant edges $\binom{a}{b}$ (without loss of generality, the split is constant in the second half) that are aligned to the $m$-th degree for general $m$, and that satisfy $U_{1^m}(a)U_{1^{m-1}}(b)|U_{1^{m-2}0}(b)$. This property only becomes necessary for $m\geq2$, and serves a similar purpose as being half-constant split plays for $m=1$.
    As seen in (\ref{phylo:prop:graph:pL:eq3}), Step \ref{phylo:prop:graph:pL:case3} always yields egdes with this property.

    The full procedure thus goes as follows: starting with any half-constant split edge that is aligned to the $1$-st degree, we can apply one of the three steps. Either, we end up with a pseudo-loop immediately, or Step \ref{phylo:prop:graph:pL:case3} yields an edge that is eligible for the $m=2$-version of the same three Steps. 
    If at any point, we apply Steps \ref{phylo:prop:graph:pL:case1} or \ref{phylo:prop:graph:pL:case2}, we are done. 
    Otherwise, after $m-1$ many iterations of Step \ref{phylo:prop:graph:pL:case3}, we arrive at a tuple that is aligned to the $m$-th degree - a pseudoloop.
\end{proof}

\begin{theorem}\label{phylo:theorem:graph:pseudoloop} Let $\sA$ be a phylogeny constraint language  that is not omni-expressive. If $\sG$ is any smooth digraph that is pp-interpretable in $\sA$ and has pseudo-algebraic length~$1$ modulo $\Aut(\Phylo)$, then $\sG$ contains a pseudo-loop modulo $\Aut(\Phylo)$.
\end{theorem}

\begin{proof}
   As in the proof of \Cref{temp:theorem:pseudoloop}, we may without loss of generality assume that $\sG$ is a digraph defined on $A^k$ whose edge relation $E^{\sA}$ is preserved by $\Pol(\sA)$. 
   %
   %
   By \Cref{phylo:theorem:classification}, we are in one of two cases. If $\sA$ has a constant endomorphism, then the statement is trivial.
   It thus remains to prove the statement for the case that $\tx\in\Pol(\sA)$.

    The proof is by induction on $k$. 
    For $k=1$, every edge is a pseudo-loop.
    Now, suppose that we have already proven the statement for all $\ell<k$ for some $k$.
    By \Cref{phylo:lemma:graph:fence}, $\bin$ either contains a uniformly split or a half-constant split edge. 
    Suppose that the former applies, and let $\binom{a}{b}\in\bin$ be uniformly split.
    Then there also is a uniformly split edge in $\bin'$, so without loss of generality, let $\binom{a}{b}\in\bin'$.
    \mo{Denote the set of entries of $\binom{a}{b}$ by $U$.
    By our induction hypothesis, the relation $\pr_{I_0(a)}(\bin)$ contains a pseudo-loop $\binom{\tilde c}{\tilde d}$. 
    Let $\binom{c}{d}\in\pr_{I_0(a)}^{-1}(\binom{\tilde c}{\tilde d})$, denote the set of entries of $\binom{c}{d}$ by $V$, and choose $\alpha\in\Aut\Phylo$ such that 
    \[
        \alpha V\prec U \textnormal{ as well as } \alpha V|U.
    \]
    %
    Consequently, by $\tx$ being an affine tree operation, we get that, for $\binom{a'}{b'}\coloneqq\tx(\binom{a}{b},\binom{\alpha c}{\alpha d}),$
    \[
    \s\binom{a'}{b'}=\s\binom{a}{b}  \textnormal{, and }\binom{\pr_{I_0(a)}a'}{\pr_{I_0(a)}b'}\sim\binom{\tilde c}{\tilde d}.
    \]
    In other words, $\binom{a'}{b'}$ splits (uniformly) just like $\binom{a}{b}$, and the ``left" parts of both vertices, i.e.\;the parts assigned $0$ by $\s$, are in the same orbit.
    It thus remains to align the ``right" parts of the tuples.
    Again, we can exploit $\tx$ being an affine tree operation.
    Denote the set of entries of $\binom{a'}{b'}$ by $U'$.
    By the induction hypothesis, also $\pr_{I_1(a)}(\bin)$ contains a pseudoloop $\binom{\tilde e}{\tilde f}$. 
    Let $\binom{e}{f}\in\pr_{I_1(a)}^{-1}(\binom{\tilde e}{\tilde f})$, denote its set of entries by $W$, and choose $\beta\in\Aut(\Phylo)$ such that 
    \[
        U'_0\prec \beta W\prec U'_1 \textnormal{ as well as } U'_0| \beta W \textnormal{ and } U'_0\cup\beta W|U'_1.
    \]
    Similarly to the previous step, it follows for $\binom{a''}{b''}\coloneqq\tx(\binom{a'}{b'},\binom{\alpha c}{\alpha d})$ that
    \[
    \s\binom{a''}{b''}=\s\binom{a}{b}\textnormal{ and } \binom{\pr_{I_0(a)} a''}{\pr_{I_0(a)} b''}\sim\binom{\tilde c}{\tilde d}\textnormal{, while }
    \binom{\pr_{I_1(a)} a''}{\pr_{I_1(a)} b''}\sim\binom{\tilde e}{\tilde f}.
    \]
    As $\binom{a''}{b''}$ splits uniformly and both the ``left" and the ``right" parts of $a''$ and $b''$ are in the same orbits, we have $a''\sim b''$.
    }

    \mo{
    It remains to consider the case where $\bin$ contains no uniformly split edges, but a half-constant split edge. 
    \Cref{phylo:lemma:graph:even} yields an edge $\binom{e}{f}\in\bin$ whose entries $e$ and $f$ are algined to the $1$-st degree. 
    \Cref{phylo:prop:graph:pL} then yields a pseudo-loop.
    }
\end{proof}

\subsection{Hypergraphs}\label{sect:phylo:hypergraphs}

For symmetric hypergraphs of arity $n\geq 2$, the construction of pseudo-loops turns out to be slightly simpler than for smooth digraphs of pseudo-algebraic length~$1$. Reminiscent of our established results for digraphs, we demonstrate that there exist hyperedges which exhibit either a uniform split or an almost uniform, partially constant split.  
The subsequent construction of actual pseudo-loops proceeds using techniques very similar to those developed for digraphs in the preceding section.

\begin{lemma}\label{phylo:lemma:hyp:parity}
    Let $\hyp$ be an $n$-ary relation on $\sL^k$ that is preserved by $\gen{\tx}$.
    If $\hyp$ is cyclic, then either $\hyp$ contains a uniformly split tuple, or it holds for all $t\in \hyp$ that $\sum_{1\leq i\leq n}\pr_i(\s(t))\in\{\overline 0,\overline 1\}$.
\end{lemma}
\begin{proof}
    Take any $t\in \hyp$, and consider its cyclic permutations $t=t^1,\ldots,t^n$. 
    Denote the set of entries of $t$ by $U$, and consider a sufficiently large finite set $V\supseteq U$ whose generic partition $(V_0,V_1)$ extends that of $U$.
    Then, there is an affine tree operation $\tx$ such that $\tx$ is dominated by $V_0$ over $V_1$, 
    $V_0\prec\tx(V_0\times V_1\cup V_1\times V_0)$,
    $\tx(V_0\times V_0\cup V_1\times V_1)\prec V_1$,
    and such that, for all $2\leq\ell\leq n$, $\nested{\tx}{\ell}(t^{n-\ell+1},\ldots,t^n)$ lies again within $V$.
    It follows from the properties of $\tx$ as an affine tree operation that the split of $t'\coloneqq\nested{\tx}{n}(t^1,\ldots,t^n)$ with respect to the partition $(V_0,V_1)$\footnote{Here, we do not use the standard split vector we defined, but the $\{0,1\}^{nk}$-tuple that assigns $0$ to entries lying in $V_0$, and $1$ to those in $V_1$.} is exactly $\sum \s(t^i)$, which is $k$-periodic. 
    If it is non-constant, this means that all vertices of $t'$ split uniformly.
    Otherwise, it means that $\sum_{1\leq i\leq n}\pr_i(\s(t))\in\{\overline 0,\overline 1\}$.
    Since this holds for any $t$, the claim is proven.
\end{proof}

\begin{lemma}\label{phylo:lemma:hyp:symmetric}
    Let $\hyp$ be an $n$-ary relation on $\sL^k$ that is preserved by $\gen{\tx}$.
    If $\hyp$ is symmetric, then $\hyp$ contains tuple $t'$ that is either uniformly split, or $n-1$ of its vertices are split uniformly, while one splits dually or constant with respect to the uniform partition.
\end{lemma}

\begin{proof}
    By \Cref{phylo:lemma:hyp:parity}, we can without loss of generality assume that $\sum_{1\leq i\leq n}\pr_i(\s(t))\in\{\overline 0,\overline 1\}$ for all $t\in\hyp$. 
    Choose $t\in\hyp$ such that $\pr_1(\s(t))$ is not constant, and let $\zeta\in\textnormal{Sym}(n)$ be the permutation that fixes $1$ while cyclicly shifting $\{2,\dots,n\}$.
    Denote $t^i\coloneqq\zeta^{i}(t)$, $i\in\br{k-2}$, as well as $t'\coloneqq \nested{\tx}{n-1}(t,t^1,\ldots,t^{n-2})$ and choose $U, V$ and $\tx$ analogously to the proof of \Cref{phylo:lemma:hyp:parity}, 
    The tuple $t'_1=\nested{\tx}{n-1}(t_1,\ldots,t_1)$
    is in the same orbit as $t_1$ and either splits according to $V_0|V_1$\footnote{By this, we mean that the split of $t'_1$ with respect to $(V_0,V_1)$ is equal to $\s(t'_1)=\s(t_1)$.} (if $n$ is even) or lies entirely in $V_0$ (if $n$ is odd).
    Therefore, it either holds that $\pr_1(\s(t'))=\s(t'_1)$, or $\pr_1(\s(t')=\overline{0}$.
    The tuple $t'_2=\nested{\tx}{n-1}(t_2,\ldots,t_n)$
    , on the other hand, must split exactly like or dual to $t_1$ with respect to $V_0|V_1$, and therefore non-constant. 
   For symmetric reasons,  the same holds for all  $t'_\ell, 2\leq\ell\leq n$, meaning that
   $(t_2,\ldots,t_n)$ splits uniformly.
\end{proof}

\begin{theorem}\label{phylo:theorem:hyp:pseudoloop} Let $\sA$ be a phylogeny constraint language  that is not omni-expressive. If $\sT$ is any symmetric
hypergraph that is pp-interpretable in $\sA$, then  $\sT$ contains a pseudo-loop modulo $\Aut(\Phylo)$.
\end{theorem}

\begin{proof}
    For the same reasons explained at the beginning of the proof of \Cref{phylo:theorem:graph:pseudoloop}, it suffices to prove that any symmetric hypergraph $\sT$ defined on $\sL^k$ for some $k$ that is invariant under $\gen{\tx}$ contains a pseudo-loop. 
    %
    %
    Let $\sT$ be of this sort and denote its hyperedge relation by $\hyp$.
    By \Cref{phylo:lemma:hyp:parity,phylo:lemma:hyp:symmetric} applied to $\hyp'$, the relation contains one of the following: 
    \begin{enumerate}
        \item\label{phylo:prop:hyp:pseudoloop:proof:en1:item1} A uniformly split tuple $t$,
        \item\label{phylo:prop:hyp:pseudoloop:proof:en1:item2} a tuple $t$ where $n-1$ vertices split uniformly, while the $n$-th splits constant with respect to the uniform partition, or 
        \item\label{phylo:prop:hyp:pseudoloop:proof:en1:item3} a tuple $t$ where $n-1$ vertices split uniformly, while the $n$-th splits dually with respect to the uniform partition. 
    \end{enumerate}
    
    In the first case, we can repeat the same argument as in the proof of \Cref{phylo:theorem:graph:pseudoloop} for uniformly split edges to generate pseudoloops via induction.
    
    In the second case, without loss of generality the constant split is $0^k$ and in the first position.
    Also in this case, we can utilise induction in a similar way:
    By the same induction hypothesis as in Case \ref{phylo:prop:hyp:pseudoloop:proof:en1:item1}, $\pr_{I_0(t_2)}(\hyp)$ contains a pseudo-loop $\tilde u$.
    Taking $u\in\pr_{I_0(t_2)}^{-1}(\tilde u)$ now allows us to generate a tuple $t'$ in $\hyp'$ with the same split as $t$ such that $\pr_{I_0(t_2)}(t')$ is a pseudo-loop. 
    %
    Denote the set of entries of $t$ and $u$ by $U(t)$ and $U(u)$, respectively. Now, choose $\alpha\in\Aut\Phylo$ such that $U_0(t)\prec \alpha U(u)\prec U_1(t)$, $U_0(t)|\alpha U(u)$, and $U_0(t)\cup\alpha U(u)|U_1(t)$. 
    The hyperedge $t'\coloneqq\tx(t,\alpha u)$ now has the desired properties. 
    We now turn to synchronise the remaining halves of each subtuple of $t'$. 
    Here, we utilise the different split behaviour of $t'_1$ compared to the other subtuples. 
    Denote the set of entries of $t'$ by $U(t')$. 
    %
    Choose $\beta\in\Aut\Phylo$ such that $U_0(t')\prec\beta U(t')\prec U_1(t')$, $U_0(t')|\beta U(t')$, $U_0(t')\cup \beta U(t')| U_1(t')$, and such that $\tx$ is dominated by $U_0$ over $\beta U$ over $U_1$.
    Denoting the cyclic shift of $t'$ by $u'$, it holds for $t''\coloneqq\tx(t',\beta u')$ that 

    \begin{enumerate}
        \item\label{phylo:prop:hyp:pseudoloop:proof:en2:item1} $\s(t'')=\s(t')=\s(t)$, i.e., the split is unchanged, 
        \item\label{phylo:prop:hyp:pseudoloop:proof:en2:item2} $\pr_{I_0(t_2)}(t''_2)\sim\dots\pr_{I_0(t_2)}(t''_n)\sim\pr_{I_0(t_2)}(t''_1)$, i.e., the $I_0(t_2)$ parts are not altered,
        \item\label{phylo:prop:hyp:pseudoloop:proof:en2:item3} $\pr_{I_1(t_2)}(t''_2)\sim\pr_{I_1(t_2)}(t''_1)\sim \pr_{I_1(t_2)}(t'_1)$, i.e.\;the $I_1(t_2)$ part of $t_2''$ has been aligned to that of $t_1''$.
    \end{enumerate} 

    Repeating this step, but now with $t''$ in place of $t'$ and a cyclic shift by two positions, we can continue and generate $t^{(3)}\in\hyp'$ such that \Cref{phylo:prop:hyp:pseudoloop:proof:en2:item1,phylo:prop:hyp:pseudoloop:proof:en2:item2} remain unchanged, while in \Cref{phylo:prop:hyp:pseudoloop:proof:en2:item3}, also the $I_1(t_2)$ part of $t_3^{(3)}$ gets aligned. Repeating this step, we ultimately arrive at $t^{(n)}$, with aligned $I_0(t_2)$- and $I_1(t_2)$-parts of all subtuples, and split $\s(t)$. 
    The only thing missing for $t^{(n)}$ to be a pseudo-loop is that in the constant-split tuple $t^{(n)}_1$, $\s(t^{(n)}_1)$ is not necessarily equal to $\s(t^{(n)}_2)$, meaning that the $I_0(t_2)$- and $I_1(t_2)$-part are not separated. 
    We achive this by once more cyclically shifting $t^{(n)}$, denoted by $u^{(n)}$ and setting $t^*\coloneqq \tx(t^{(n)},u^{(n)})$. 
    By the nature of $\tx$ as an affine tree operation, the split of $u^{(n)}_1$ is transferred to $t^*_1$, and we end up with a pseudo-loop.

    It remains to identify a pseudo-loop in Case \ref{phylo:prop:hyp:pseudoloop:proof:en1:item3}.
    Without loss of generality, all but the first vertex of $t$ split uniformly. 
    Very similarly to the previous cases, we can align all but the first vertex to be in the same orbit in a new tuple $t'\in\hyp'$, while preserving $\s(t')=\s(t)$. 
    Denote the cyclic shift of $t'$ by $u'$ and consider $t''\coloneqq \tx(t',u')$. 
    Its split $\s(t'')$ assigns constant $1$ to all entries of the first two vertices, and constant $0$ to those of the remaining $n-2$ many. 
    Note that all vertices assigned constant $0$ are in the same orbit. 
    This strict split between the vertices now allows us to create a new hyperedge $t^{(3)}\in\hyp'$, once more by utilising the affine tree operation $\tx$, where all vertices lie in the same orbit, namely that of those vertices of $t''$ assigned $0$.
\end{proof}

\section{Pseudo-loop conditions from pseudo-loops}
\label{sect:corollaries}
Following the standard method 
-- established first by Siggers~\cite{Siggers} for finite structures, and adapted to $\omega$-categorical ones by Barto-Pinsker~\cite{Topo} -- the existence of pseudo-loops in \Cref{theorem:pseudoloop,theorem:mainthmhyp} implies the validity of the corresponding pseudo-loop conditions in all temporal and phylogeny constraint languages that are not omni-expressive.  For the reader's convenience, we provide the proof of \Cref{corollary:siggers} for temporal constraint languages. The corresponding statement for loop conditions induced by cyclic, $2$-transitive hypergraphs  is derived from \Cref{theorem:mainthmhyp} in a similar way, as are the corresponding statements for phylogeny constraint languages.  
\begin{corollary}
    Let $\sA$ be a temporal  constraint language that is not omni-expressive. If $\sG$ is any finite smooth digraph of algebraic length~$1$, then  $\Pol(\sA)$ satisfies the $\sG$-pseudo-loop condition.
\end{corollary}

\begin{proof} 
Suppose that $\sG=(\{1, \dots, n\}; E_{\sG}) $ is a finite smooth digraph of algebraic length $1$, and $u \circ s(x_{i_1} , \dots , x_{i_m} ) \approx v \circ s(x_{j_1} , \dots , x_{j_m} )$ is the corresponding $\sG$-pseudo-loop condition. 
We show that for every $k\geq 1$ and for every $a^1, \dots, a^n \in \Q^k$ there exists an $m$-ary operation $s \in \Pol(\sA)$ such that $s(a^{i_1}, \dots, a^{i_m}) $ and $s(a^{j_1}, \dots, a^{j_m}) $ are contained in the same $\Aut(\Q)$-orbit. A standard compactness argument (as provided, for example, in~\cite[Lemma 4.2]{Topo})  using oligomorphicity of $\Aut(\Q)$ then yields the `global' validity in $\Pol(\sA)$ of the desired pseudo-identity.  

To this end, take $k \geq 1$, and let  $\ a^1, \dots, a^n$ be arbitrary $k$-tuples of elements of~$\Q$.
Consider the binary relation $R\subseteq \Q^k \times \Q^k$ that consists precisely of all the tuples $(a^i, a^j)$  for which $(i,j)\in E_{\sG}$, and let $E_{\sH}$ be the smallest binary relation on $\Q^k$ that contains $R$ and is preserved by $\Pol(\sA)$.
Since $E_{\sH}$ is 
preserved by all polymorphisms of $\sA$, it is pp-definable in $\sA$~\cite{BodirskyNesetrilJLC}. It follows that $\sH\coloneqq(\Q^k; E_{\sH})$
is pp-interpretable in $\sA$. Moreover, observe that $\sH$ is a smooth digraph of algebraic length $1$.  Applying~\Cref{theorem:pseudoloop}, we get a pseudo-loop modulo $\Aut(\Q)$ in $\sH$. Unravelling definitions, this means that there exists $s \in \Pol(\sA)$  such that $ s( a^{i_1}, \dots,  a^{i_m})$ and $ s( a^{j_1}, \dots,  a^{j_m}) $ are contained in the same $\Aut(\Q)$-orbit. 
\end{proof}

\bibliographystyle{alpha}
\bibliography{file}

\Addresses 

\newpage

\appendix
\section*{Appendix}

\section{Missing proofs from Section \texorpdfstring{\ref{section:minclean}}{Section 4}}\label{sect:missingminclean}

\ignore{\subsection{Min-clean tuples for \texorpdfstring{$\mi$}{mi}}
\mihypint*

\minreadymi*
\begin{proof}
    Suppose that $t\in R$ satisfies $|M(t)|<n,$ and without loss of generality assume that $1 \in M(t)$. 
    By $2$-transitivity of $R$, there are tuples $t^2,\dots,t^n\in R$ such that $1\in M(t^i)$, $i\notin M(t^i)$, and $\min(t^i)=\min(t)$ for all $2\leq i \leq n$.
    In \Cref{def:mi_mx}, choose $\alpha$ in the definition of $\mi$ such that $\alpha(\min(t))=\min(t).$ 
    It then follows that the tuple $s\coloneqq\nested{\mi}{n-1}(t^2,\dots, t^n)$  satisfies $\min(s)=\min(t)$, and $M(s)=\{1\}$.
\end{proof}}

\subsection{Min-clean tuples for \texorpdfstring{$\min$}{min}}

\begin{lemma}\label{lemma:min:minclean}
    Let $\bin$ be a binary relation on $\Q^k$ that is preserved by $\min$.
    If $\bin$ is smooth, then  $\bin$ contains a min-clean tuple.
\end{lemma}

\begin{proof}
    If $\bin$ contains a tuple $t$ with $|M(t)|=1,$ then $t$ is min-clean.  Otherwise, take $S\subseteq \bin$ to be smooth and weakly connected.  Observe that by connectivity of $S$, all vertices $a,b$ contained in $\supp(S)$ share the same minimal entry. It follows that    \[\minx(\min(a,b))=\minx(a)\cup\minx(b)\] for all $a,b \in \supp(S)$. Pick $t^1, \dots, t^m \in S$ whose  
    $\sim_{2k}$-classes represent all $\sim_{2k}$-classes appearing in $S$. Clearly, this is possible as there are only finitely many $\sim_{2k}$-classes on $\Q^{2k}$.  Observe that by smoothness of $S$, the set of $\sim_k$-classes appearing among $t_1^1, \dots, t_1^m$ coincides with the set of $\sim_k$-classes of $t_2^1, \dots, t_2^m$.
     Therefore, the tuple $t\coloneqq \nested{\min}{m}(t^1, \dots, t^m) $ is min-clean because it satisfies 
    $\minx(t_1)=\bigcup_{a \in \supp(S)}\minx(a)=\minx(t_2)$. 
\end{proof}

As it turns out, in the case of cyclic 
relations preserved by $\min$, our endeavour to prove the existence of min-clean tuples  leads us to the immediate conclusion that every such relation must contain an actual loop. 

\begin{lemma} \label{lemma:hypmin:minclean}
Let $n \geq 2$, and  let $R$ be an $n$-ary relation on $\Q^k$ that is preserved by $\min$. If $R$ is cyclic, then $R$ contains a loop.
\end{lemma}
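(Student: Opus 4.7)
The plan is to observe that cyclicity together with preservation by $\min$ immediately forces an honest loop, not merely a min-clean tuple. The idea is to apply $\min$ simultaneously to all cyclic shifts of a single tuple, which forces every coordinate of the resulting tuple to equal the coordinatewise minimum of all the $t_i$'s, yielding a constant tuple.

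Concretely, I would start by picking any $t=(t_1,\dots,t_n)\in R$ with $t_i\in\Q^k$. Since $R$ is cyclic, all its cyclic shifts
\[
t^{(j)}\coloneqq(t_{j+1},t_{j+2},\dots,t_n,t_1,\dots,t_j),\qquad j=0,1,\dots,n-1,
\]
belong to $R$. Because $R$ is preserved by $\min$ (and hence by $\nested{\min}{n}$), the tuple
\[
u\coloneqq\nested{\min}{n}\bigl(t^{(0)},t^{(1)},\dots,t^{(n-1)}\bigr)
\]
lies in $R$. Now the $i$-th coordinate of $u$ is obtained by taking the coordinatewise minimum over all $n$ cyclic shifts, which reads off $t_i,t_{i+1},\dots,t_{i-1}$ (indices modulo $n$) -- the full list $t_1,\dots,t_n$, just in a different order. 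Hence $u_i=\min(t_1,\dots,t_n)\in\Q^k$ is the same $k$-tuple for every $i\in[n]$, so $u=(m,\dots,m)$ where $m$ is the coordinatewise minimum of $t_1,\dots,t_n$. This is a loop in $R$.

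There is essentially no obstacle here, and in particular no need for the more delicate constructions used for $\mi$, $\mx$, or $\ell\ell$: commutativity and associativity of $\min$, combined with the fact that the cyclic group action permutes the $n$ entries transitively, collapse all entries to a single value in one shot. It is worth noting that the argument genuinely uses only cyclicity (not full symmetry), matching the statement of the lemma, and that the conclusion is stronger than min-cleanliness -- we obtain an actual constant tuple, which in particular is min-clean.
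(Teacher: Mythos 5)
Your proof is correct and is essentially the paper's proof: both take all $n$ cyclic shifts of an arbitrary $t\in R$, apply $\nested{\min}{n}$, and observe that by the symmetry of iterated $\min$ every coordinate of the resulting tuple equals the coordinatewise minimum of $t_1,\dots,t_n$, yielding a loop. You merely spell out the symmetry argument in a bit more detail than the paper does.
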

\begin{proof}
    Take $t\in R$ arbitrarily, and let $t^1, \dots, t^n $ denote all cyclic permutations of the tuple~$t$. Since $R$ is cyclic, we have $t^i \in R$ for every $i\leq n$.
    By symmetry of $\nested{\min}{n}$, the tuple defined by $\nested{\min}{n}  (t^1, \dots, t^n)$  is a  loop in $R$. 
\end{proof}

\subsection{Min-clean tuples for \texorpdfstring{$\mx$}{mx}}

\begin{lemma}\label{l:mx:fencing}
Let $\bin$ be a binary relation on $\Q^k$ that is preserved by $\gen{\mx}$. Either $\bin$ contains a tuple $t$ with $|M(t)|=1$, or  
for all $t, t'\in \bin$ with $\min(t_i)=\min(t'_i)$ and $\minx(t_i)=\minx(t'_i)$ for $i \in \{1,2\}$, also  $\minx(t_j)=\minx(t'_j)$ for $j \neq i$.
\end{lemma}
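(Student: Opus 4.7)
The plan is to prove the contrapositive: assume $\bin$ contains no tuple $s$ with $|M(s)|=1$, and derive the conclusion about matching $\minx$'s. Under this assumption, every $s \in \bin$ satisfies $\min(s_1)=\min(s_2)$, which will let us identify the common minimum value $c$ at all four coordinates appearing in the hypothesis.

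Fix $t,t' \in \bin$ with $\min(t_i)=\min(t'_i)$ and $\minx(t_i)=\minx(t'_i)$ for some $i\in\{1,2\}$; without loss of generality take $i=1$, $j=2$ (the other case is symmetric). Let $c\coloneqq\min(t_1)=\min(t'_1)$, $I\coloneqq\minx(t_1)=\minx(t'_1)$, $J\coloneqq\minx(t_2)$, and $J'\coloneqq\minx(t'_2)$. By the standing assumption, also $\min(t_2)=\min(t'_2)=c$. The plan is to form $u\coloneqq\mx(t,t')\in \bin$, compute $\min(u_1)$ and $\min(u_2)$ exactly, and show that if $J\neq J'$ then $|M(u)|=1$, contradicting our assumption. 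Hence $J=J'$, which is the desired conclusion.

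The concrete computation uses the gap property $\alpha(c)<\beta(c)<\alpha(c+\epsilon)$ for all $\epsilon>0$ from \Cref{def:mi_mx}. For $u_1$: on $p\in I$ both $(t_1)_p=(t'_1)_p=c$, giving $(u_1)_p=\beta(c)$; on $p\notin I$ both entries strictly exceed $c$, and since the tuple is finite there is a uniform $\epsilon>0$ with $\min((t_1)_p,(t'_1)_p)\ge c+\epsilon$, yielding $(u_1)_p\ge \alpha(c+\epsilon)>\beta(c)$. Thus $\min(u_1)=\beta(c)$. For $u_2$: on $p\in J\cap J'$ one gets $\beta(c)$; on $p\in J\triangle J'$ exactly one of the two entries is $c$ and the other is strictly larger, so $\mx$ takes the ``unequal'' branch at the minimum $c$ and outputs $\alpha(c)$; on $p\notin J\cup J'$ both are strictly larger than $c$, giving an output $>\beta(c)$ as above. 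Hence $\min(u_2)=\alpha(c)$ whenever $J\triangle J'\neq\varnothing$, which, combined with $\alpha(c)<\beta(c)=\min(u_1)$, forces $M(u)=\{2\}$, contradicting the assumption.

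I expect the main obstacle to be the careful bookkeeping on the ``unequal'' positions $p\notin I$ and $p\notin J\cup J'$: one must argue that strict inequality $(t_1)_p>c$ (in $\Q$) combined with finiteness of the tuple yields a uniform lower bound $c+\epsilon$ on the coordinatewise minimum, so that the separation $\beta(c)<\alpha(c+\epsilon)$ can be invoked to push these values strictly above $\beta(c)$. Once this separation is in place, the three-case analysis ($p\in I$, $p\in J\cap J'$, $p\in J\triangle J'$, $p$ outside) becomes purely mechanical, and both directions of the $i$/$j$-symmetry are handled simultaneously since the argument only uses the distinguished role of coordinate $i=1$ through the hypothesis $\minx(t_1)=\minx(t'_1)$.
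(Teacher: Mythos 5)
Your proposal is correct and follows essentially the same route as the paper's proof: both apply $\mx$ to $(t,t')$ and exploit the dichotomy between the values $\alpha(c)$ and $\beta(c)$ -- produced respectively when the symmetric difference of $\minx$-sets is non-empty versus empty -- to conclude that $|M(\mx(t,t'))|=1$ would follow from $\minx(t_j)\neq\minx(t'_j)$. The paper packages your coordinatewise case analysis into a single bidirectional observation about $\min(\mx(a,b))$, but the substance is identical.
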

\begin{proof}
   Suppose that all tuples $t$ satisfy $|M(t)|=2$. Without loss of generality, let $t, t'\in \bin$ be such that $q\coloneqq\min(t_1)=\min(t'_1)$ and $\minx(t_1)=\minx(t'_1)$. 
    Observe that for all $a,b \in \Q^k$ with $\min(a)=\min(b)$ it holds that $\min(\mx(a, b))=\beta(\min(a))$ if and only if $\minx(a)\triangle\minx(b)=\varnothing$, where $\beta$ is the endomorphism in the definition of $\mx$. The tuple $s\coloneqq\mx(t, t')$ now satisfies $\min(s_1)=\beta(q)$. 
    By assumption, we have $\min(s_2)=\min(s_1)=\beta(q)$, and thus indeed $\minx(t_2)\triangle\minx(t_2')=\varnothing$.
\end{proof}

\begin{lemma}\label{lemma:mx:minclean}
     Let $\bin$ be a binary relation on $\Q^k$ that is  preserved by $\gen{\mx}$. If $\bin$ is smooth and of pseudo-algebraic length~$1$,      then $\bin$ contains a min-clean tuple.
\end{lemma}
\begin{proof}
Any tuple with $|M(t)|=1$ is min-clean. Suppose that $|M(t)|=2$ for all $t \in \bin$. Take $S\subseteq E$ that is smooth, weakly connected, and has  pseudo-algebraic length $1$. As the $\sim_{k}$-factor of $S$ is finite, smooth, weakly connected, and of algebraic length $1$, it is $m$-linked for some $m\in \N$. We claim that every $t \in S $ is min-clean. Indeed, for all $t \in S $ there are $t'_2\sim_k t_2$ and an $m$-fence  in~$S $ connecting $t_1$ and $t'_2$. Similar to the proof of \Cref{lemma:mi:minx_int2}, repeatedly applying \Cref{l:mx:fencing} instead of \Cref{l:mihypint} to the relation $\genrel{S}{\mx}\subseteq E$, it  follows that $\minx(t_1)=\minx(t_2)$. 

\end{proof}

\begin{lemma}\label{l:hypmx:minclean}
Let $n\geq 3$, and let $R$ be an $n$-ary  relation on $\Q^k$ that is  preserved by $\gen\mx$. If $R$ is  cyclic, then $R$ contains a min-clean tuple. 
\end{lemma}

\begin{proof}
  Let $t \in R$, and 
  set $q\coloneqq\min(t)$. We choose $\alpha$ and $\beta$ in the definition of $\mx$ such that $\alpha$ fixes $q$, such that   $\beta(q)$ is strictly smaller than the second smallest entry of $t$ (if such an entry exists), and  such that $\alpha(\beta(q))=\beta(q)$. 
  Denote the cyclic permutations of $t$ by $t^{1},\dots,t^{n}$, and set $s\coloneqq \nested{\mx}{n}(t^{1},\dots,t^{n})$. We claim that $s\in R$ is min-clean. 
   To see this, for every $j \in \br k$ set 
   \[N(j)\coloneqq |\{i \in \br n:\pr_j(t_i)=q\}|.\]
   Observe that whenever $N(j)\neq 0$,  then for all $i \in\br n$ the value of  $\pr_j(s_i)$ is either $q$ or $\beta(q)$, depending only on the parity of $N(j)$. Moreover, all remaining entries of $s$ are greater than $\beta(q)$. It follows that $s$ is indeed min-clean.  
\end{proof}

\subsection{Min-clean tuples for \texorpdfstring{$\ell\ell$}{ll}}
Remarkably, for the existence of min-clean tuples in relations preserved by $\ell \ell$, we will in fact only make use of the binary operation $\lex$ that is generated by $\ell \ell$.

\begin{lemma}\label{l:int1hyp}
Let $n \geq 2$, and let $R$ be an $n$-ary relation on $\Q^k$ that is preserved by $\lex$. If $t^1 \in R$ is  min-ready, and $t^2, \dots, t^{m} \in R$ satisfy $i \in M(t^1) \cap \dots \cap M(t^{m}),$ $\minx(t^1_i) \cap \dots \cap \minx( t^{m}_i )\neq \varnothing  $, and $j\in M(t_1)$, then also $j \in M(t^2)\cap\dots\cap M(t^m)$ and $\minx(t^1_j) \cap \dots \cap \minx(t^{m}_j) \neq \varnothing.$   
\end{lemma}
\begin{proof}
Consider the tuple $t$ defined by $t\coloneqq\nested{\lex}{m}(t^1, \dots, t^m)$. Observe that $\minx(t_i)=\minx(t^1_i) \cap \dots \cap \minx( t^{m}_i )$ and $\min(t)=\nested{\lex}{m}(\min(t^1_i),\dots,\min(t^m_i))$. By min-readiness of $t^1$, we also have $\min(t_i)=\min(t_j),$ which by injectivity of $\nested{\lex}{m}$ implies $\minx(t^1_j) \cap \dots \cap \minx(t^{m}_j) \neq \varnothing.$
\end{proof}

\begin{lemma}\label{lemma:lex:minx_int2}
Let $S$ be a binary relation on $\Q^k$. If $S$ is smooth, weakly connected, and of pseudo-algebraic length $1$, then either $\genrel{S}{\lex}$ contains a tuple $t$ with $|M(t)|=1$, or  
     $\bigcap_{a\in \supp(S)}\minx(a)\neq\varnothing$.
\end{lemma}
\begin{proof}
   Once more, we proceed as in the proof of \Cref{lemma:mi:minx_int2}. This time, we repeatedly use \Cref{l:int1hyp}  instead of \Cref{l:mihypint}.
\end{proof}

\begin{lemma}\label{lemma:lex:minclean}
Let $S$ be a binary relation on $\Q^k$. If $S$ is smooth, weakly connected, and of pseudo-algebraic length $1$, then $S'$  contains a min-clean tuple.
\end{lemma}

\begin{proof}
    By \Cref{lemma:derivative}, $S'$ is non-empty. If $\genrel{S}{\lex}$ contains a tuple $t$ with $|M(t)|=1$, then for every $s\in S'$ the tuple $\lex(t,s)$ is min-clean and contained in $S'$. 
    
   Suppose now that $|M(t)|=2$ for all $t \in \genrel{S}{\lex}$. In particular, every element of $\genrel{S}{\lex}$ is min-ready.
    By connectedness, all tuples in $S$ share the same minimal entry $q$.
    Choose $t^1, \dots, t^m \in S$ such that every 
    $\sim_{2k}$-class  of an element in~$S$ coincides with the $\sim_{2k}$-class of one of the tuples among $t^1, \dots, t^m $.
    Once more, by smoothness of $S$, the set of all $\sim_{k}$-classes of the tuples $t^1_1, \dots, t^m_1 $ coincides with the set of all $\sim_{k}$-classes of the tuples $t^1_2, \dots, t^m_2$. 
    Thus, by the choice of $t^1, \dots, t^m $, we have 
    \[\bigcap_{i\in\br m}\minx(t_1^i)=\bigcap_{i\in\br m}\minx(t_2^i)=\bigcap_{a\in\supp( S)}\minx(a).\]
    \Cref{lemma:lex:minx_int2}  yields $\bigcap_{a\in\supp( S)}\minx(a)\neq \varnothing$. 
    Therefore, the tuple $s\coloneqq\nested{\lex}{m}(t^1, \dots, t^m)$ 
    satisfies \[\min(s_1)=\min(s_2)=\nested{\lex}{m}(q,\dots,q),\] as well as $\minx(s_1)=\bigcap_{a\in\supp( S)}\minx(a)=\minx(s_2)$. Moreover, by the choice of $t^1, \dots, t^m $ and injectivity of $\nested{\lex}{m}$, we also have $s \in S'$.
\end{proof}

\begin{lemma}\label{l:minready}
    Let $n \geq 2$, and let $R$ be an $n$-ary  $2$-transitive relation on $\Q^k$ that is preserved by $\lex$. Either there is $t \in R$ with $|M(t)|=1,$ or $|M(t)|=n$ for all $t \in R$.
\end{lemma}
\begin{proof}
    The proof is similar to the one of \Cref{l:minreadymi}, using $\nested{\lex}{n-1}$ in place of $\nested{\mi}{n-1}$. 
\end{proof}

\begin{lemma}\label{l:hyplexminclean}
   Let $n \geq 3$, and let $R$ be an $n$-ary relation on $\Q^k$ that is preserved by $\gen\lex$. If $R$ is cyclic and $2$-transitive,  then $R'\subseteq R$ contains a min-clean tuple.
\end{lemma}
\begin{proof}
    If $R'$ 
    contains a tuple $t$ with $|M(t)|=1$, this tuple is min-clean.
    If $R'$ (which is non-empty by \Cref{lemma:derivative}) does not contain such a tuple, then by \Cref{l:minready} all tuples in $R'$ satisfy $|M(t)|= n$ and are, in particular, min-ready. 
    Take $t\in R'$ arbitrarily. 
    As in the proof of  \Cref{l:hypmi:minclean}, employing \Cref{l:int1hyp} instead of \Cref{l:mihypint},  one shows that   $\bigcap_{i \leq n}\minx(t_i)\neq \varnothing.$  
    Let $t^1, \dots, t^{n}$ denote all cyclic permutations of the tuple $t$. We have $t^i \in R'$ for all $i \leq n$ by cyclicity of $R$ and \Cref{lemma:derivative}.  The tuple   $s\coloneqq\nested{\lex}{n}(t^1, \dots, t^n)$ is contained in $R'$. Moreover, observe that $\min(s)=\nested{\lex}{m}(\min(t), \dots, \min(t))$ and $\minx(s_i)=\bigcap_{j\leq n}\minx(t_j)$ for all $i\leq n$. In particular, $s\in R'$ is  min-clean. 
\end{proof}

\ignore{
\section{Missing proofs from Section \texorpdfstring{\ref{sect:chasingorbits}}{Section 5}}}

\ignore{\subsection{Templates with \texorpdfstring{$pp$}{pp}}\label{sect:pptemplates}

\pphypslice*
\begin{proof}
Without loss of generality assume that $1 \in M(u^1)$, and set $q\coloneqq\min(u^1_1)$. By min-cleanliness of $u^1$, its cyclic permutations  $u^2\coloneqq(u^1_n, u^1_1, u^1_2, \dots), \dots, u^{n}\coloneqq(u^1_2, \dots, u^1_1)$ and $I\coloneqq\minx(u^1_1)$ now have the required properties. 
\end{proof}}

\ignore{
\subsection{Templates with \texorpdfstring{$\ell \ell$}{ll}}\label{sect:missingchasingorbits}}
\ignore{
\derivative*
\begin{proof}
    The first statement is obtained by applying $\nested{\lex}{m}$ of a sufficiently high arity $m$ to tuples in $R$ until the kernel of the resulting tuple can no more be refined. Suppose that $n=2$ and $R$ is smooth. To see that $R'$ is smooth, pick $t\in R'$,  
    $f \in \gen{\lex}$ of some arity $m \in \N$, and $s^1, \dots, s^m \in R$ 
    such that $t=f(s^1, \dots, s^m).$ We show that $t_1 \in \pr_2(R')$, the converse statement for $t_2$ is shown analogously.
    By smoothness of $R$, for every $i\leq m$ there exists $r^i \in \Q^k$ such that $\binom{r^i}{s^i_1}\in R$. Let $w\coloneqq f(r^1, \dots, r^m)$. Picking any $\binom{u}{v}\in R'$, we now have that $\lex(\binom{w}{t_1},\binom{u}{v})\in R'$. From $t_1\sim_k \lex(t_1,v)$, the statement follows. The last statement is clear from the definition of $R'$.
\end{proof}}

\ignore{\begin{lemma}\label{cl:ll:slice}
    Let $\bin$ be a binary smooth relation on $\Q^k$ that is preserved by $\gen{\ell\ell}$. Let $S \subseteq \bin$ be smooth such that $S'$ contains a min-clean tuple $\binom{a^1}{b^1}$. Then there exist $I\subseteq k$, $m,n\geq1,  q\in \Q$, and $\binom{a^2}{b^2},\ldots,\binom{a^n}{b^n}\in S'$ such that for all $\binom{a}{b} \in S'$ the tuple  defined by 
  \begin{equation*}
       \binom{a'}{b'}\coloneqq \nested{\ell\ell_q}{n+1} \left(\binom{a^1}{b^1},\binom{a^2}{b^2}, \dots,\binom{a^n}{b^n},\binom{a}{b}\right)
  \end{equation*} 
  satisfies:
  \begin{itemize}
      \item $I_m(a')=I_m(b')=I$ and $a'\sim_I b'$;
      \item $a'\sim_{\br k\setminus I} a$ and $b'\sim_{\br k\setminus I} b$.
  \end{itemize}  
\end{lemma}
\begin{proof}
    We repeat the construction in the proof of \Cref{cl:pp:slice}, exchanging $\pp{q}$ for $\ell\ell_q$ and $E$ for $S'$.
    The latter is possible because \Cref{lemma:derivative} guarantees smoothness of $S'$, allowing us to choose suitable elements $b^{i+1}$.
    By definition of $S'$, the tuple $(a', b')$ then has the required properties. 
\end{proof}}
\ignore{
\llinductionstep*
\begin{proof}
We repeat the induction in the proof of \Cref{lemma:pp:inductionstep}. This time, \Cref{lemma:lex:minclean} provides a min-clean tuple within $S'$, and we use \Cref{cl:ll:slice} instead of \Cref{cl:pp:slice} to finish the proof. 
\end{proof}}

\ignore{
\begin{lemma}\label{l:ll:hypslice}
   Let $n \geq 2$, and let $R$ be an $n$-ary cyclic relation on $\Q^k$ that is preserved by $\gen{\ell\ell}$. If $R'$ contains a min-clean tuple $t^1$,
then there exist $I\subseteq \br k,  q\in \Q$, and $t^2, \dots, t^n\in R$ such that for all $t \in R'$ the tuple $t'$  defined by 
  \begin{equation*}
       t'\coloneqq \nested{pp_q}{n+1} \left(t^1, t^2, \dots, t^n, t\right)
  \end{equation*} 
  satisfies:
  \begin{itemize}
      \item $I_{1}(t'_i)=I$ for all $i \leq n$;
      \item $t'_i\sim_{\br k\setminus I} t_i$ for all $i \leq  n$.
  \end{itemize} 
\end{lemma}
\begin{proof}
    By \Cref{lemma:derivative}, $R'$ is cyclic, thus containing all cyclic permutations of $t^1$. 
    The proof now is analogous to the one of \Cref{l:hypslice}. 
\end{proof}}

\ignore{
\llhypinductionsstep*
\begin{proof}
    We proceed by induction on $k$, employing \Cref{l:hyplexminclean} for the existence of a min-clean tuple within $R'$ together with \Cref{l:ll:hypslice}.
\end{proof}}

\end{document}